\begin{document}

\title{Modelling and Analysis of Network Security\\- a Probabilistic Value-passing CCS Approach}

\numberofauthors{3}
\author{
% 1st. author
\alignauthor Qian Zhang\\
       \affaddr{State Key Laboratory of Computer Science,Institute of Software Chinese Academy of Sciences}\\
       \affaddr{Beijing, China}\\
       \email{zhangq@ios.ac.cn}
% 2nd. author
\alignauthor Ying Jiang\\
       \affaddr{State Key Laboratory of Computer Science, Institute of SoftwareChinese Academy of Sciences}\\
       \affaddr{Beijing, China}\\
       \email{jy@ios.ac.cn}
% 3rd. author
\alignauthor Liping Ding\\
       \affaddr{National Engineering Research Center for Fundamental Software,Institute of Software Chinese Academy of Sciences}\\
       \affaddr{Beijing, China}\\
       \email{liping@nfs.iscas.ac.cn}
       }
%\and  % use '\and' if you need 'another row' of author names
\maketitle

\begin{abstract}
In this work, we propose a probabilistic value-passing CCS (Calculus of Communicating System) approach to model and analyze a typical network security scenario with one attacker and one defender.
By minimizing this model with respect to probabilistic bisimulation and abstracting it through graph-theoretic methods,
two algorithms based on backward induction are designed to compute Nash Equilibrium strategy and Social Optimal strategy respectively. For each algorithm,
the correctness  is proved and an implementation is realized.
Finally, this approach is illustrated by a detailed case study.
\end{abstract}

% A category with the (minimum) three required fields
\category{C.2.0}{Computer-Communication Networks}{General}[Security and protection]

\terms{Security}
\keywords{Network security; Nash equilibrium strategy; Social optimal strategy; Reactive model;
Probabilistic value passing CCS}

\section{Introduction}
Modeling and analysis of network security has been a hot research spot in the network security domain.
It has been studied from different perspectives.  Among them are two main approaches, one based on game-theoretic methods \cite{martin},  and
one based on (probabilistic) process algebra \cite{robin,rob,yuxin}.
In the later 1990's, game theoretic methods were introduced for modeling and analyzing network security \cite{syverson}.
These methods consist in applying different kinds of games to different network scenarios with one attacker and one defender \cite{sankardas}.
Roughly speaking,
{\it static game} is a one-shot game in which players choose action simultaneously.
It is often used to model the scenarios in which the attacker and defender have no idea on the action chosen by the adversary (for instance the scenario of information warfare),
and to compute the best strategy for players in a quantitative way \cite{jormakka}.
{\it Stochastic game} is often used to model the scenarios which involve probabilistic transitions through states of network systems according to the actions chosen by the attacker and the defender \cite{nguyen,klye}.
{\it Markov game} is an extension of game theory to MDP-like environments \cite{vander}.
It is often used to model the scenarios in which the future offensive-defensive behaviors will impact on the present action choice of attacker and defender \cite{xiaolin}.
In {\it Bayesian game}, the characteristics about other players is incomplete and players use Bayesian analysis in predicting the outcome \cite{Harsanyi}.
A dynamic Bayesian game with two players, called {\it Signaling game}, is often used to model intrusion detection in mobile ad-hoc networks and to analyze Nash equilibrium in a qualitative way \cite{patcha}.
On the other hand, as far as we know, (probabilistic) process algebra approach focus on verifying network security protocols.
For example, in the earlier 1980's, a simple version of the alternating bit protocol
in $ACP_\tau$ (Algebra of Communicating Processes with silent actions) was verified \cite{bergstra}.
For describing and analyzing cryptographic protocols, the spi calculus, an extension of the $\pi$ calculus, was designed \cite{abadi}.
Recently, a generalization of the bisimilarity pseudo-metric based on the Kantorovich lifting is proposed,
this metric allows to deal with a wider class of properties such as those used in security and privacy \cite{lilixu}.

In this paper, we propose a probabilistic value-passing CCS (PVCCS) approach for modeling and analyzing a typical network security scenario with one attacker and one defender.
A network system is supposed to be composed of three participants: one attacker, one defender and the network environment which is the hardware and software services of the network under consideration. We consider all possible behaviors of the participants at each state of the system as processes and assign each state with a process describing all possible interactions currently performed among the participants.
In this way we establish  a network state transition model, often called reactive model in the literature \cite{rob}, based on PVCCS.
By minimizing this model with respect to probabilistic bisimulation and abstracting it via graph-theoretic methods,
two algorithms based on backward induction are designed to compute Nash Equilibrium Strategy (NES) \cite{xiannuan, jean,ls} and Social Optimal Strategy (SOS) \cite{rd,david} respectively.
The former represents a stable strategy of which neither the attacker nor the defender is willing to change the current situation, and
the latter is the policy to minimize the damages caused by the attacker. For each algorithm, the correctness  is proved and an implementation is realized. This approach is illustrated by a detailed case study on an example introduced in \cite{klye}. The example
describes a local network connected to Internet under the assumption that the firewall is unreliable, and the operating system on the machine is insufficiently hardening, and the attacker has chance to pretend as a root user in web server, stealing or damaging data stored in private file server and private workstation.
The major contributions of our work are:
\begin{itemize}
\item
establish a reactive model based on PVCCS for a typical network security scenario
which is usually modeled via perfect and complete information games.
\item
minimize the state space of network system via probabilistic bisimulation and abstract it via graph-theoretic methods.
This allows us to reduce the search space and hence considerably optimize the complexity of the concerned algorithms.

\item
propose two algorithms to compute Nash Equilibrium and Social Optimal strategy respectively.
The novelty consists in combing graph-theoretic methods with backward induction,  which  enables us on the one hand to increase reuseness and on the other hand to make the backward induction possible in the setting of some infinite paths.
\end{itemize}
Note that our method can filter out invalid Nash Equilibrium strategies from the results obtained by traditional game-theoretic methods. For instance, in the example introduced in \cite{klye},
three Nash Equilibrium strategies obtained ultimately by game-theoretic approach methods,
while only two of them obtained by our method:
we filter out the invalid Nash Equilibrium strategy from the results in \cite{klye}.
Note also our method can be applied to other network security scenarios.
For example, the proposed reactive model can be extended conservatively to a generative model based on PVCCS.
In this way we provide a uniform framework for modeling and analyzing network security scenarios which are usually modeled either via perfect and complete information games or via perfect and incomplete games.
However, for the limited space of this paper, we will focus on the reactive setting for the conciseness and easier understanding of this work.

In the remaining sections, we shall
review some notions of graph theory and establish the reactive model based on PVCCS (Section 2);
present the formal definitions of NES and SOS in this model, as well as the corresponding algorithms
and their correctness proofs (Section 3);
then illustrate our method by a case study (Section 4);
fianlly, discuss the conclusion (Section 5).
Appendix shows proofs of theorems, tables referred to the case study and a notation index.

\section{Preliminaries and Reactive model based on PVCCS}

\subsection{Graph theory}
We firstly recall some notions of graph theory: Strongly Connected Component (SCC), Directed Acyclic Graph (DAG) and Path Contraction \cite{reinhard,serge,david}.

\textbf{SCC} of an arbitrary directed graph form a partition into subgraphs that are themselves strongly connected (it is possible to reach any vertex starting from any other vertex by traversing edges in the direction).

\textbf{DAG} is a directed graph with no directed cycles. There are two useful DAG related properties we used in our paper: (1) if $H$ is a weakly connected graph, $H'$ is obtained by viewing each SCC in $H$ as one vertex, $H'$ must be a DAG; (2) if $H$ is a DAG, $H$ has at least one vertex whose out-degree is 0.

\textbf{Path Contraction}
Let $e=xy$ be an edge of a graph $H=(V,E)$. $H/e$ is a graph $(V',E')$ with vertex set $V':=(V\backslash \{x,y\})\cup \{v_{e}\}$, and edge set $E':=\{vw\in E\mid \{v,w\}\cap \{x,y\}=\varnothing\}\cup \{v_{e}w\mid ~xw\in E\backslash\{e\}~or~yw\in E\backslash \{e\}\}$ (Figure \ref{edgecontraction}). Path contraction occurs upon the set of edges in a path that contract to form a single edge between the endpoints of the path after a series of edge contractions.
\begin{figure}[htpb]
\begin{center}
\includegraphics[scale=0.6]{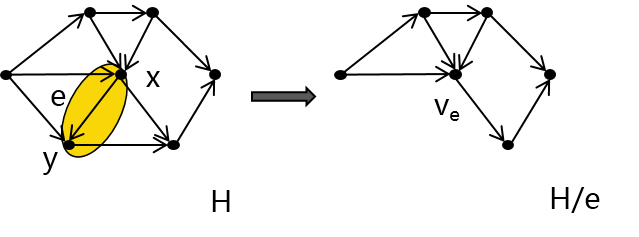}
\caption{Edge contraction}
\label{edgecontraction}
\end{center}
\end{figure}
\subsection{PVCCS\textsubscript{R}}
\newtheorem{definition}{Definition}[section]
\newtheorem{Theorem}{Theorem}[section]
\newtheorem{Lemma}{Lemma}[section]
$\rm{PVCCS_\textit{R}}$ is a reactive model for Probabilistic Value-passing CCS, proposed based on the reactive model for probabilistic CCS \cite{rob}.

\textbf{Syntax:}
Let $\mathbf{\mathcal{A}}$ be a set of channel names ranged over by $a$, and $\mathbf{\mathcal{\overline{A}}}$ be the set of co-names, i.e., $\mathbf{\mathcal{\overline{A}}}=\{\overline{a}\mid a\in \mathbf{\mathcal{A}}\}$, and $\overline{\overline{a}}=a$ by convention. $\mathbf{Label}=\mathcal{A} \cup \mathcal{\overline{A}}$. $\mathbf{Var}$ is a set of value variables ranged over by $x$ and $\mathbf{Val}$ is a value set ranged over by $v$. $\mathbf{e}$ and $\mathbf{b}$ denote value expression and boolean expression respectively. The set of actions, ranged over by $\alpha$, $\mathbf{Act}=\{a(x)\mid a\in \mathcal{A}\}\cup \{\overline{a}(\mathbf{e})\mid \overline{a}\in \mathcal{\overline{A}}\}\cup \{\tau \}$,  where $\tau$ is the silent action. $\mathbf{\mathcal{K}}$ and $\mathbf{\mathcal{X}}$ are a set of process identifiers and a set of process variables respectively. Each process identifier $A\in \mathcal{K}$ is assigned an arity, a non-negative integer representing the number of parameters which it takes.

$\mathbf{Pr}$ is the set of processes in $\rm{PVCCS_{\textit{R}}}$ defined inductively as follows, where $P$, $P_{i}$ are already in $\mathbf{Pr}$:\\
\begin{equation*}
\begin{aligned}
\mathbf{Pr}::=&Nil\mid \underset{i\in I}{\sum}\underset{j\in J}{\sum}[p_{\textit{ij}}]\alpha_{\textit{i}}.P_{\textit{ij}}\mid P_1|P_2\mid P\backslash R\\&
\mid \textit{if}~\mathbf{b}~\textit{then}~P_1~\textit{else}~P_2
\mid A(x)\\
\alpha::=&a(x)\mid \overline{a}(\mathbf{e})
\end{aligned}
\end{equation*}
where $a\in \mathbf{Label}$, $R\subseteq \mathcal{A}$. $I$, $J$ are index sets, and $\forall i\in I$, $p_{\textit{ij}}\in (0,1]$, $\underset{j\in I}{\dot{\sum}}p_{\textit{ij}}=1$, and $\alpha_{\textit{i}}\neq \alpha_{\textit{j}}$ if $i\neq j$. $\sum$ and $\dot{\sum}$ are summation notations for processes and real numbers respectively. Furthermore, each process constant $A(x)$ is defined recursively by associating to each identifier an equation of the form $A(x)\stackrel{\textit{def}}{=}P$, where $P$ contains no process variables and no free value variables except $x$.

$Nil$ is an empty process which does nothing; $\underset{i\in I}{\sum}\underset{j\in J}{\sum}[p_{\textit{ij}}]\alpha_{\textit{i}}.P_{\textit{ij}}$ is a summation process with probabilistic choice which means if performs action $\alpha_{\textit{i}}$, $P_{\textit{ij}}$ will be chosen to be proceed with probability $p_{\textit{ij}}$, for example, $[0.2]\alpha.P_{1}+ [0.8]\alpha.P_{2}+[1]\beta.P_{3}$ is a process which will choose process $P_{1}$ with probability 0.2 and $P_{2}$ with probability 0.8 if performs action $\alpha$, or will choose $P_{3}$ with probability 1 if performs action $\beta$, here $\alpha_{i}$ stands for an action prefix and there are two kinds of prefixes: input prefix $a(x)$ and output prefix $\overline{a}(\mathbf{e})$. If $J$ is a singleton set, then we will omit the probability from the summation process, such as $\underset{i\in I}{\sum}\underset{j\in J}{\sum}[1]\alpha_{\textit{i}}.P_{\textit{ij}}$ will be written as $\underset{i\in I}{\sum}\alpha_{\textit{i}}.P_{\textit{i}}$, and if both $I$ and $J$ are singleton sets, then the summation process is written as $\alpha.P$; $P_{1}|P_{2}$ represents the combined behavior of $P_{1}$ and $P_{2}$ in parallel; $P\backslash R$ is a channel restriction, whose behavior is like that of $P$ as long as $P$ does not perform any action with channel $a\in R\cup \overline{R}$;
$\textit{if}~\mathbf{b}~\textit{then}~P_1~\textit{else}~P_2$ is a conditional process which enacts $P_{1}$ if $\mathbf{b}$ is $true$, else $P_{2}$.

\textbf{Semantics:}
The operational semantics of $\rm{PVCCS_{\textit{R}}}$ is defined by the rules in Table \ref{operationalsemanticspvccs}, where $P\stackrel{\alpha[p]}{\rightarrow}Q$ describes a transition that, by performing an action $\alpha$, starts from $P$ and leads to $Q$ with probability $p$. Mapping $chan:\mathbf{Act}\rightarrow \mathcal{A}$, i.e., $chan(a(x))=chan(\overline{a}(\mathbf{e}))=a$. And $P\{\mathbf{e}/x\}$ means substituting $\mathbf{e}$ for every free occurrences of $x$ in process $P$. By convention, if $P_{1}\stackrel{\alpha[p]}{\rightarrow}P_{2}$ and $P_{2}\stackrel{\beta[q]}{\rightarrow}P_{3}$, then we use $P_{1}\stackrel{\alpha[p]\beta[q]}{\Longrightarrow}P_{3}$ to represent multi-step transition.
\begin{table*}
\centering
\renewcommand{\arraystretch}{2.5}
\begin{tabular}{ll}\hline
$[In]\frac{}{\underset{i\in I}{\sum}\underset{j\in J}{\sum}[p_{\textit{ij}}]a(x).P_{\textit{ij}}\stackrel{a(\mathbf{e})[p_{\textit{ij}}]} {\longrightarrow}P_{\textit{ij}}\{\mathbf{e}/x\}}$&
$[Out]\frac{}{\underset{i\in I}{\sum}\underset{j\in J}{\sum}[p_{\textit{ij}}]\overline{a}(\mathbf{e}).P_{\textit{ij}} \stackrel{\overline{a}(\mathbf{e})[p_{\textit{ij}}]}{\longrightarrow}P_{\textit{ij}}} $\\
$[Res] \frac{P\stackrel{\alpha[p]}{\longrightarrow}P'}{P\backslash R\stackrel{\alpha[p]}{\longrightarrow}P'\backslash R}~~(chan(\alpha)\notin R\cup \overline{R}$) &
$[Con]\frac{P\{\mathbf{e}/x\}\stackrel{\alpha[p]}{\longrightarrow} P'}{A(\mathbf{e})\stackrel{\alpha[p]}{\longrightarrow}P'}~~(A(x)\stackrel{\textit{def}} {=}P)   $\\
$[\textit{Par}_{\textit{l}}] \frac{P_1\stackrel{\alpha[p]}{\longrightarrow}P_1^{'}}{P_1|P_2\stackrel{\alpha [p]}{\longrightarrow}P_1^{'}|P_2}$&
$[\textit{Par}_{\textit{r}}] \frac{P_2\stackrel{\alpha[p]}{\longrightarrow}P_2^{'}}{P_1|P_2\stackrel{\alpha [p]}{\longrightarrow}P_1|P_2^{'}} $\\
$[Com]\frac{P_1\stackrel{a(\mathbf{e})[p]}{\longrightarrow}P_1^{'},~ P_2\stackrel {\overline{a}(\mathbf{e})[q]}{\longrightarrow}P_2^{'}}{P_1|P_2\stackrel{\tau[p\cdot q]}{\longrightarrow}P_1^{'}|P_2^{'}}$ & $ $\\
$[\textit{If}_{\textit{t}}]\frac{P_{1}\stackrel{\alpha[p]}{\longrightarrow}P_{1}^{'}} {\textit{if}~\mathbf{b}~\textit{then}~P_1~\textit{else}~P_2\stackrel{\alpha[p]} {\longrightarrow}P_{1}^{'}}(\mathbf{b}=true)  $ & $ [\textit{If}_{\textit{f}}]\frac{P_{2}\stackrel{\alpha[p]}{\longrightarrow}P_{2}^{'}} {\textit{if}~\mathbf{b}~\textit{then}~P_1~\textit{else}~P_2\stackrel{\alpha[p]} {\longrightarrow}P_{2}^{'}}(\mathbf{b}=\textit{false})  $\\
\hline
\end{tabular}
\caption{\label{operationalsemanticspvccs}Operational semantics of $\rm{PVCCS_{\textit{R}}}$}
\end{table*}

\textbf{Probabilistic Bisimulation:}
We recall the definition of \textsf{cumulative probability distribution function} (cPDF) \cite{rob} which computes the total probability in which a process derives a set of processes. $\wp$ is the powerset operator and we write $\mathbf{Pr}/\mathcal{R}$ to denote the set of equivalence classes induced by equivalence relation $\mathcal{R}$ over $\mathbf{Pr}$.
\begin{definition}
$\mu:(\mathbf{Pr}\times Act\times \wp(\mathbf{Pr}))\rightarrow[0,1]$ is the total function given by: $\forall \alpha\in Act$, $\forall P\in \mathbf{Pr}$, $\forall C\subseteq \mathbf{Pr}$, $\mu(P,\alpha,C)=\dot{\sum}\{p|P\stackrel{\alpha[p]}{\longrightarrow}Q,~Q\in C\}$. \end{definition}

\begin{definition}
An equivalence relation $\mathcal{R}\subseteq \begin{bf}Pr\end{bf} \times \begin{bf}Pr\end{bf}$ is a \textsf{probabilistic bisimulation} if $(P, Q)\in \mathcal{R}$ implies: $\forall C\in \mathbf{Pr}/\mathcal{R}$, $\forall\alpha\in Act$, $\mu(P, \alpha, C)=\mu(Q, \alpha, C)$.
\end{definition}
$P$ and $Q$ are probabilistic bisimilar, written as $P\sim Q$, if there exists a probabilistic bisimulation $\mathcal{R}$ s.t. $P\mathcal{R}Q$.

\subsection{Modelling for Network Security based on  PVCCS\textsubscript{R}}
\textbf{ComModel} focuses on modeling the network security scenario modeled usually via perfect and complete information game: a network system state considers the situations of attacker, defender and network environment together; the participants act in turn at each state and the interactions among the participants will cause the network state transition with certain probability; each state transition produces immediate payoff to attacker and defender, and the former (positive values) is in terms of the extent of damage he does to the network while the latter (negative values) is measured by the time of recovery; the future offensive-defensive behaviors will impact on the action choice of attacker and defender at each state. Nash Equilibrium strategy represents a stable plan of action for attacker and defender in long run, while the Social Optimal strategy is a policy to minimize the damage caused by attacker.

Assuming $S$ is the set of network system state, ranged over by $s_{\textit{i}}$, $i\in I$, $I$ is an index set; action sets of attacker and defender are $A^{a}$ and $A^{d}$ respectively, $u,v$ represent the general values,  $A^{a}(s_{\textit{i}})\subseteq A^{a}$ is the action set of attacker at $s_{\textit{i}}$, as well as $A^{d}(s_{\textit{i}})\subseteq A^{d}$ is that of defender; state transition probability is a function $\dot{p}:S\times A^{a}\times A^{d}\times S\rightarrow [0,1]$, and immediate payoff associated with each transition is a function $\dot{r}: S\times A^{a}\times A^{d}\rightarrow \mathds{R}_1\times \mathds{R}_2$, where $\mathds{R}$ is the real number set, and we use index to distinguish the first and the second element, and $\dot{r}^a: S\times A^{a}\times A^{d}\rightarrow \mathds{R}_{1}$ represents the immediate payoff of attacker, while $\dot{r}^d: S\times A^{a}\times A^{d}\rightarrow \mathds{R}_{2}$ is that of defender.

\textbf{ComModel}, a model based on PVCCS\textsubscript{R}, is used to modeling for the network security scenario depicted as above. The processes represent all possible behaviors of the participants in network system at each state. Each state is assigned with a process depicting all possible interactions currently performed among the participants. Then we establish a network state transition system based on the process transitions.

In \textbf{ComModel}, the channel set  $\mathcal{A}=\{Attc,\textit{Defd},Tell_a,$ $Tell_d\}$, $\mathbf{Label}=\mathcal{A}\cup \overline{\mathcal{A}}\cup \{\overline{Log}\}\cup \{\overline{Rec}\}$. The value set $\mathbf{Val}=A^a\cup A^d\cup T$, where $T \subseteq \mathds{R}\times\mathds{R}$. $\mathbf{Var}$ is the set of value variables.  $\mathbf{Act}$ is the union of behavior sets of the three participants ($Act^a$,$Act^d$ and $Act^n$) defined as follows:
\begin{equation*}
\begin{aligned}
\mathbf{Act}=&Act^a\cup Act^d\cup Act^n\\
Act^a=&\{\overline{Attc}(v)\mid v\in A^a \}\cup\{Tell_a(x)\mid x\in \mathbf{Var}\}\\
Act^d=&\{\overline{\textit{Defd}}(v)\mid v\in A^d \}\cup \{Tell_d(x)\mid x\in \mathbf{Var}\}\\
Act^n=&\{Attc(x)\mid x\in \mathbf{Var} \}\cup \{\textit{Defd}(x)\mid x\in \mathbf{Var} \}\\
&\cup \{\overline{Tell_a}(x)\mid x\in \mathbf{Var}\cup A^d \}\\
&\cup \{\overline{Tell_d}(x)\mid x\in \mathbf{Var}\cup A^a \}\\
&\cup \{\overline{Log}(x,y)\mid x\in A^a\cup \mathbf{Var}, y\in A^d\cup \mathbf{Var}\}.\\
&\cup \{\overline{Rec}(\dot{r}(s,u,v))\mid s\in S, u\in A^a, v\in A^d\}\\
\end{aligned}
\end{equation*}

\begin{figure}[!htpb]
\begin{center}
\includegraphics[scale=0.3]{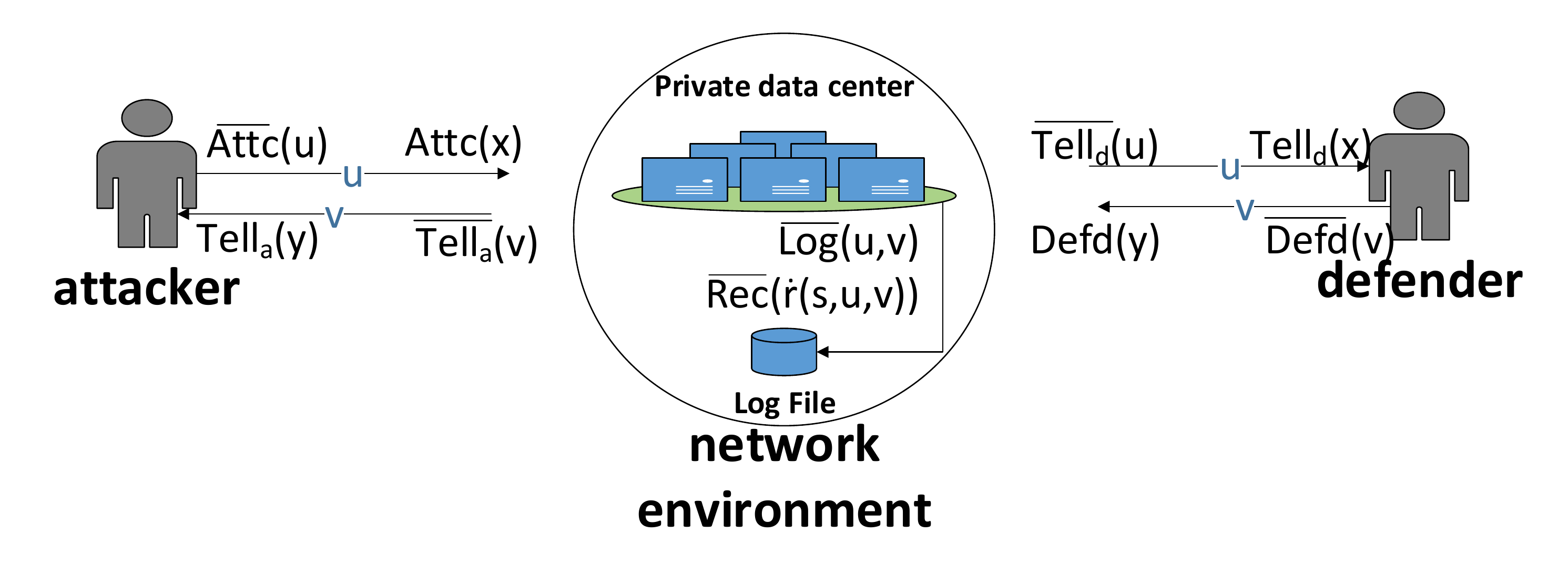}
\caption{Interactions among participants at state $s$}
\label{interactions}
\end{center}
\end{figure}
Figure \ref{interactions} shows one interaction among the participants at state $s$. $\overline{Attc}(u)$ means attacker takes attack $u$, similar to $\overline{\textit{Defd}}(v)$ for defender; $Attc(x)$ (or $\textit{Defd}(x)$) means network environment is attacked (or is defended); $\overline{Tell_d}(x)$ (or $\overline{Tell_a}(x)$) means network environment informs defender (or attacker) the action chosen by attacker (or defender); $Tell_d(x)$ (or $Tell_a(x)$) means defender (or attacker) is informed that attack (or defense) has happened; $\overline{Log}(x,y)$ means the network environment writes the values of $x$ and $y$ into a log file, where $x$ and $y$ is used to receive the values of  attack and defense respectively; $\overline{Rec}(\dot{r}(s,u,v))$ stands for the network environment records the immediate payoff to attacker and defender if they choose $u$ and $v$ at state $s$ respectively.

The processes describing all possible behaviors of the participants at state $s_{\textit{i}}$, denoted by $\textit{pA}_{i}$, $\textit{pD}_{i}$ and $\textit{pN}_{i}$, are defined as follows:
\begin{align*}
\textit{pA}_{i}\stackrel{\textit{def}}{=}&\underset{u\in A^a(s_{\textit{i}})}{\sum}\overline{Attc}(u).Tell_a(y).Nil        \\
\textit{pD}_{i}\stackrel{\textit{def}}{=}&~Tell_d(x).\underset{v\in A^d(s_{\textit{i}})}{\sum}\overline{\textit{Defd}}(v).Nil\\
\textit{pN}_{i}\stackrel{\textit{def}}{=}&   ~Attc(x).\overline{Tell_d}(x).\textit{Defd}(y).\overline{Tell_a}(y). Tr_{\textit{i}}(x,y)\\
Tr_{\textit{i}}(x,y)\stackrel{\textit{def}}{=}&\underset{u\in A^a(s_{\textit{i}})\atop v\in A^d(s_{\textit{i}})}{\sum}\overline{Log}(u,v).(if~(x=u,y=v)~ then\\
&\underset{j\in I}{\sum}[\dot{p}(s_{\textit{i}},u,v,s_{\textit{j}})] \overline{Rec}(\dot{r}(s_{\textit{i}},u,v)). (\textit{pA}_j|\textit{pD}_j|\textit{pN}_j)\\
& else~Nil)
\end{align*}

The process assigned to each state $s_{\textit{i}}$ is defined as $$G_{\textit{i}}\stackrel{\textit{def}}{=}(\textit{pA}_i|\textit{pD}_i|\textit{pN}_i) \backslash R, R=\{Attc,\textit{Defd},Tell_a,Tell_d\}$$
We get the network state transition system, \textbf{TS} for short, based on process transitions. Minimizing $\mathbf{TS}$ by shrinking probabilistic bisimilar pairs of states. We conduct a series of path contractions on $\mathbf{TS}$ and obtained a new graph named as $\mathbf{ConTS}$ without information loss as follows:
\begin{definition}
\begin{bf}ConTS\end{bf} is a tuple $(V,E,L)$
\begin{itemize}
\item $V=\{G_{\textit{i}}\mid G_{\textit{i}}$ is the process we assign to state $s_{\textit{i}}\}$
\item $E=\{(G_i, G_{j})\mid$ ranged over $e_{\textit{ij}}$, if there exists a multi-transition $G_{\textit{i}}\stackrel{(\tau[1])^{4} \overline{Log}(u,v)[1]\overline {Rec}(\dot{r}(s_{\textit{i}},u,v))[\dot{p}(s_{\textit{i}},u,v,s_{\textit{j}})]} {\Longrightarrow}G_j\}$
\item $L(e_{\textit{ij}})=\{(L_{\textit{Act}}(e_{\textit{ij}}), L_{\textit{TranP}}(e_{\textit{ij}}),L_{\textit{WeiP}}(e_{\textit{ij}}) \mid e_{\textit{ij}} \in E\}$
\begin{itemize}
\item action pair: $L_{\textit{Act}}(e_{\textit{ij}})=(u,v)$
\begin{itemize}
\item $L_{\textit{Act}}^a(e_{\textit{ij}})=u$, $L_{\textit{Act}}^d(e_{\textit{ij}})=v$
\end{itemize}
\item transition probability: $L_{\textit{TranP}}(e_{\textit{ij}})=\dot{p} (s_{\textit{i}},u,v,s_{\textit{j}})$
\item weight pair: $L_{\textit{WeiP}}(e_{\textit{ij}})=\dot{r}(s_{\textit{i}},u,v)$
\begin{itemize}
\item $L_{\textit{WeiP}}^a(e_{\textit{ij}})=\dot{r}^a(s_{\textit{i}},u,v)$
\item  $L_{\textit{WeiP}}^d(e_{\textit{ij}})=\dot{r}^d(s_{\textit{i}},u,v)$
\end{itemize}
\end{itemize}
\end{itemize}
\end{definition}
$L_{\textit{WeiP}}^S(e_{\textit{ij}})=L_{\textit{WeiP}}^a(e_{\textit{ij}})
+|L_{\textit{WeiP}}^d(e_{\textit{ij}})|$ denotes the sum of absolute weight pair of $e_{\textit{ij}}$.
By convention, in any network security scenario, for any $e$, $e'\in E$, if $L_{\textit{WeiP}}^a(e)>L_{\textit{WeiP}}^a(e')$ then $L_{\textit{WeiP}}^d(e)<L_{\textit{WeiP}}^d(e')$.

\section{Analyzing Properties as Graph Theory Approach}
We firstly introduce the definitions of Nash Equilibrium strategy (\textit{NES}) and Social Optimal strategy (\textit{SOS}) in our model, and then we illustrate the algorithms proposed to comput \textit{NES} and \textit{SOS} respectively.
\subsection{NES and SOS}
\begin{definition}
$\forall G_{\textit{i}}\in V$, an \textsf{execution} of $G_{\textit{i}}$ in \begin{bf}{ConTS}\end{bf}, denoted by $\pi_{\textit{i}}$, is a walk (vertices and edges appearing alternately) starting from $G_{\textit{i}}$ and ending with a cycle, on which every vertex's out-degree is 1.
\end{definition}
According to the definition of execution, $\pi_{\textit{i}}$ is in the form of $G_{\textit{i}}e_{\textit{ij}}G_{\textit{j}}...(G_{\textit{k}}...G_{\textit{l}}e_{\textit{lk}}G_{\textit{k}})$ which is ended by a cycle starting with $G_{\textit{k}}$, where $G_{\textit{i}}$ and $G_{\textit{k}}$ may be the same node. $\pi_{i}$ can be written as $\pi_{i}^e$ if $e$ is the first edge of $\pi_{i}$;
$\pi_{\textit{i}}[j]$ denotes the subsequence of $\pi_{\textit{i}}$ starting from $G_{\textit{j}}$, where $G_{\textit{j}}$ is a vertex on $\pi_{\textit{i}}$.
\begin{definition}
The \textsf{payoff} to attacker and defender on execution $\pi_{\textit{i}}$, denoted by $PF^a(\pi_{\textit{i}})$ and $PF^d(\pi_{\textit{i}})$ respectively, are defined as follows:
\begin{align*}
PF^a(\pi_{\textit{i}})&=L^a_{\textit{WeiP}}(e_{\textit{ij}}) +\beta\cdot L_{\textit{TranP}}(e_{\textit{ij}})\cdot PF^a(\pi_{\textit{i}}[j])\\
PF^d(\pi_{\textit{i}})&=L^d_{\textit{WeiP}}(e_{\textit{ij}}) +\beta\cdot L_{\textit{TranP}}(e_{\textit{ij}})\cdot PF^d(\pi_{\textit{i}}[j])
\end{align*}
where $\beta\in(0,1)$ is a discount factor. The sum of absolute payoff on $\pi_{\textit{i}}$ of attacker and defender is denoted as $PF^{\textit{S}}(\pi_{\textit{i}})$, and  $PF^{\textit{S}}(\pi_{\textit{i}})=PF^a(\pi_{\textit{i}})+ |PF^d(\pi_{\textit{i}})|$.
\end{definition}
\begin{Theorem}
\label{payoffconverged}
$\forall G_{\textit{i}}\in V$, $\pi_{\textit{i}}$ is an execution of $G_{\textit{i}}$, $PF^a(\pi_{\textit{i}})$ and  $PF^d(\pi_{\textit{i}})$ are converged.
\end{Theorem}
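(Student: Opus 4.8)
The plan is to show that the recursively-defined quantities $PF^a(\pi_i)$ and $PF^d(\pi_i)$ are well defined by unfolding the recursion along the walk $\pi_i$ and recognising the result as a geometric-type series that converges absolutely. First I would fix an execution $\pi_i = G_i e_{ij} G_j \cdots (G_k \cdots G_l e_{lk} G_k)$ and observe that, because every vertex on $\pi_i$ has out-degree $1$, the walk is completely determined and eventually periodic: it consists of a finite ``tail'' from $G_i$ to the entry point $G_k$ of the cycle, followed by infinitely many repetitions of a fixed finite cycle $\sigma$ of some length $m \ge 1$. Unfolding the defining equation for $PF^a$ along $\pi_i$ expresses $PF^a(\pi_i)$ as an infinite sum $\sum_{n\ge 0} c_n$, where the $n$-th term is $\beta^{n}$ times a product of $n$ transition probabilities $L_{\textit{TranP}}(\cdot)\in(0,1]$ times a weight $L^a_{\textit{WeiP}}(\cdot)$ picked up at the $n$-th step.

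The key estimate is then immediate: since every $L_{\textit{TranP}}(e)\in(0,1]$, the product of transition probabilities along any prefix is at most $1$, and since $V$ (hence $E$) is finite there is a uniform bound $W = \max_{e\in E} |L^a_{\textit{WeiP}}(e)|$ on the absolute weights. Therefore $|c_n| \le \beta^{n} W$, and $\sum_{n\ge 0}|c_n| \le W \sum_{n\ge 0}\beta^{n} = W/(1-\beta) < \infty$ because $\beta\in(0,1)$. Hence the series defining $PF^a(\pi_i)$ converges absolutely; the same argument with $L^d_{\textit{WeiP}}$ in place of $L^a_{\textit{WeiP}}$ handles $PF^d(\pi_i)$. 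To make the ``unfolding'' rigorous one should note that the recursion only ever refers to suffixes $\pi_i[j]$ that are themselves executions, and that after at most $|V|+m$ steps the suffix returns to an already-seen execution; thus one can alternatively set up a finite linear system whose unknowns are the values $PF^a(\pi_i[j])$ for the finitely many distinct suffixes $G_j$ occurring on $\pi_i$, and check that the coefficient matrix $I - \beta D$, with $D$ the diagonal matrix of transition probabilities, is invertible since $\beta\,\|D\|_\infty < 1$. Either route gives existence and uniqueness of the claimed values.

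The main obstacle is purely a matter of bookkeeping rather than depth: one must be careful about what ``convergence'' of a recursively-defined expression means when the underlying walk is infinite, and in particular must argue that the naive unfolding does not depend on an arbitrary choice (it does not, because out-degree $1$ makes the continuation unique) and that the recursion is contractive (it is, with contraction factor $\beta\cdot L_{\textit{TranP}} \le \beta < 1$). I would therefore present the finite-linear-system formulation as the cleanest way to close the argument: reduce to the cycle $\sigma$, write the fixed-point equation $\mathbf{x} = \mathbf{b} + \beta D \mathbf{x}$ for the vector of cycle-payoffs, invoke $\|\beta D\|_\infty \le \beta < 1$ to get $\mathbf{x} = (I-\beta D)^{-1}\mathbf{b}$ (equivalently, the Neumann series $\sum_{n\ge 0}(\beta D)^n \mathbf{b}$ converges), and then propagate back along the finite tail to obtain $PF^a(\pi_i)$ and $PF^d(\pi_i)$ as finite expressions in these cycle values. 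Absolute convergence of the corresponding infinite series is the statement that this Neumann series converges, which is exactly the estimate above.
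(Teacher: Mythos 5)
Your proof is correct and follows essentially the same route as the paper's: both exploit the fact that an execution is a finite tail followed by an infinitely repeated cycle, so that unfolding the recursion yields a series dominated by $W\sum_{n\ge 0}\beta^{n}$ (the paper computes the exact geometric limit $A+B\cdot\beta^{k}/(1-\beta^{m+1})$, whereas you give the domination bound plus a fixed-point formulation that also settles well-definedness). One small slip worth fixing: the coefficient matrix of your linear system is not diagonal --- each row has a single entry $\beta\,L_{\textit{TranP}}(\cdot)$ linking a suffix to its unique successor on the walk --- but since its $\infty$-norm is still at most $\beta<1$, the invertibility of $I-\beta D$ and the Neumann-series argument are unaffected.
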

\begin{proof}
Based on the definition of payoff on an execution of $G_{\textit{i}}$ and limiting laws, we show the proof details for $PF^a(\pi_{\textit{i}})$ in Appendix. The proof for $PF^d(\pi_{\textit{i}})$ is similar.
\end{proof}
Nash Equilibrium Execution and Social Optimal Execution are defined coinductively \cite{davide07} as follows:
\begin{definition}
$\pi_{\textit{i}}$ is \textsf{Nash Equilibrium Execution (NEE)} of $G_{\textit{i}}$ if it satisfies:
\begin{align*}
PF^a(\pi_{\textit{i}})&=\underset{e_{\textit{ij}}\in E'(G_{\textit{i}})}{\max}\{L_{\textit{WeiP}}^a({e_{\textit{ij}}})+ \beta\cdot L_{\textit{TranP}}(e_{\textit{ij}})\cdot PF^a(\pi_{\textit{j}})\} \\
PF^d(\pi_{\textit{i}})&=\underset{e_{\textit{ij}}\in E^a_{e}(G_{\textit{i}})}{\max}\{L_{\textit{WeiP}}^d({e_{\textit{ij}}})+ \beta\cdot L_{\textit{TranP}}(e_{\textit{ij}})\cdot PF^d(\pi_{\textit{j}}) \}
\end{align*}
where $\pi_{\textit{j}}$ is \textit{NEE} of $G_{\textit{j}}$, $e$ is the first edge of $\pi_{\textit{i}}$, $E^a_{e}(G_{\textit{i}})=\{e'\in E(G_{\textit{i}})\mid L_{\textit{Act}}^a(e')=L_{\textit{Act}}^a(e)\}$ including $e$, and  $E'(G_{\textit{i}})=\{\arg\underset{e'\in E^a_{e''}(G_{\textit{i}})}{\max}\{L_{\textit{WeiP}}^d(e')+ \beta\cdot L_{\textit{TranP}}(e')\cdot PF^d(\pi_{\textit{j}})\}, \forall e''\in E(G_{\textit{i}})\}$.
\end{definition}

\begin{definition}
$\pi_{\textit{i}}$ is \textsf{Social Optimal Execution (SOE)} of $G_{\textit{i}}$, if it satisfies:
$$PF^{\textit{S}}(\pi_{\textit{i}})=\underset{e_{\textit{ij}}\in E(G_{\textit{i}})}{\min}\{L_{\textit{WeiP}}^S({e_{\textit{ij}}})+ \beta\cdot L_{\textit{TranP}}(e_{\textit{ij}})\cdot PF^S(\pi_{\textit{j}}) \}$$
where $\pi_{\textit{j}}$ is \textit{SOE} of $G_{\textit{j}}$
\end{definition}

\begin{definition}
\textsf{Strategy} is a sequence consisting of action pair (one from attacker and one from defender) at each state.
\end{definition}

\begin{definition}
\textsf{Nash Equilibrium Strategy} (\textit{NES}) is a strategy of which every $G_{\textit{i}}$'s execution based on is NEE of $G_{\textit{i}}$.
\end{definition}

\begin{definition}
\textsf{Social Optimal Strategy} (\textit{SOS}) is a strategy of which every $G_{\textit{i}}$'s execution based on is SOE of $G_{\textit{i}}$.
\end{definition}

\subsection{Algorithms}
The way to compute $\textit{NES}$ (or $\textit{SOS}$) in \textbf{ConTS} is to find a spanning subgraph of \textbf{ConTS} satisfying following conditions:\\
$\mathds{A}$. Each vertex's outdegree is 1;\\
$\mathds{B}$. Each vertex's execution in this subgraph is its $\textit{NEE}$ (or $\textit{SOE}$).

For backward inductive analysis, we firstly find SCC of \textbf{ConTS} based on Tarjan's algorithm \cite{reinhard} and construct \textbf{Abstraction} (\textbf{Abs} for short) by viewing each SCC as one vertex. $\mathds{V}$(\textbf{Abs}) denotes the vertex set of \textbf{Abs} ranged over by $D$. \textbf{Abs} is a DAG, and we rename $D$ with $\textit{Leave}$ if its out-degree is 0, else with \textit{Non-Leave}. By convention, $\forall D\in \mathds{V}$(\textbf{Abs}), $\mathcal{V}(D)=\{G_{\textit{i}}\in V\mid G_{\textit{i}}$ belongs to the SCC represented by $D\}$.
\begin{definition}
$\forall D\in \mathds{V}(\begin{bf}{Abs}\end{bf})$, the \textsf{priority} of $D$, denoted by \textit{prior}(D), is defined inductively:\\
(1) $prior(D)=n$, if $D$ is a $\textit{Leave}$, and $n$ is the size of $\mathds{V}$(\begin{bf}{Abs}\end{bf}).\\
(2) $prior(D)=\min\{prior(D')-1\mid $ $D'$ is any direct successor of $D$ in \begin{bf}Abs\end{bf}$\}$
\end{definition}

\begin{definition}
$D$ \textsf{depends on} $D'$ if $D'$ appears in one of the paths starting from $D$ in \begin{bf}Abs\end{bf}.
\end{definition}

\begin{Theorem}
If $prior(D)<prior(D')$ then $D'$ does not depend on $D$.
\end{Theorem}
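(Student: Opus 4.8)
The plan is to prove the contrapositive: if $D'$ \emph{does} depend on $D$, then $prior(D')\le prior(D)$ (equivalently $prior(D)\ge prior(D')$), which is exactly the negation of the conclusion $prior(D)<prior(D')$. So I would assume $D'$ depends on $D$ and aim to show $prior(D)\ge prior(D')$.

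First I would unfold the definition of dependence. By the definition given, $D'$ depends on $D$ means $D$ appears on some path starting from $D'$ in \textbf{Abs}. Hence there is a directed path $D'=D_0\to D_1\to\cdots\to D_m=D$ in \textbf{Abs} with $m\ge 0$. The case $m=0$ is trivial since then $D=D'$ and $prior(D)=prior(D')$. For $m\ge 1$, the key local fact is supplied by clause (2) of the priority definition: whenever $D_k\to D_{k+1}$ is an edge (so $D_{k+1}$ is a direct successor of $D_k$), we have $prior(D_k)=\min\{prior(D'')-1\mid D''\text{ a direct successor of }D_k\}\le prior(D_{k+1})-1<prior(D_{k+1})$. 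Thus $prior$ strictly increases along every edge, giving $prior(D_0)<prior(D_1)<\cdots<prior(D_m)$, i.e. $prior(D')<prior(D)$, which certainly yields $prior(D)\ge prior(D')$ and contradicts $prior(D)<prior(D')$.

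I would then carry out an induction on the path length $m$ to make this rigorous: the base case $m=0$ as above, and the inductive step combining the single-edge inequality $prior(D_0)<prior(D_1)$ with the induction hypothesis applied to the subpath $D_1\to\cdots\to D_m$. One technical point to address cleanly is that clause (2) is an inductive definition on the DAG \textbf{Abs}, so I must confirm $prior$ is well-defined on every vertex; this follows from the stated DAG property that a DAG has at least one out-degree-$0$ vertex, letting us assign priorities by a reverse-topological ordering, with $\textit{Leave}$ vertices seeded by clause (1).

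The main obstacle is purely a matter of orientation and sign discipline: the definition of dependence runs from $D$ toward its successors, while clause (2) of $prior$ also looks at direct successors, so it is easy to conflate which endpoint carries the smaller priority. The safe bookkeeping is to remember that priority strictly \emph{increases} as one moves forward along edges (successors have strictly larger priority), so a vertex reached later on a path from $D'$ has larger priority than $D'$; since $D$ is reached from $D'$, $D$ has the larger priority, forcing $prior(D)\ge prior(D')$ and closing the contrapositive.
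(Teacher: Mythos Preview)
Your proposal is correct and takes essentially the same approach as the paper: assume $D'$ depends on $D$, use the path from $D'$ to $D$ together with clause~(2) of the priority definition to conclude $prior(D')<prior(D)$, contradicting the hypothesis. Your version is in fact more careful than the paper's own proof, which jumps directly from ``$D$ lies on a path from $D'$'' to ``$prior(D')<prior(D)$'' without making the induction along the path explicit; your step-by-step monotonicity argument and the remark on well-definedness of $prior$ over the DAG fill that in.
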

\begin{proof}
We prove it by contradiction: if $D'$ depends on $D$, then $D$ appears in one of the paths starting from $D'$ in \begin{bf}Abs\end{bf}, so we have $prior(D')=\min\{prior(D'')-1\mid $ $D''$ is any direct successor of $D'$ in \begin{bf}Abs\end{bf}$\} < prior(D)$, contradiction.
\end{proof}

If $D$ does not depend on $D'$, then computing \textit{NES/SOS} of $D'$ has no impaction on computing \textit{NES/SOS} of $D$. To find \textit{NES/SOS} of $D$ is to find \textit{NEE/SOE} of all $G_{\textit{i}}\in \mathcal{V}(D)$.

The algorithms for computing $\textit{NES}$ and $\textit{SOS}$, denoted as $\mathbf{AlgNES}()$ and $\mathbf{AlgSOS}()$ respectively, are both based on backward induction. The framework of $\mathbf{AlgNES}()$ is as follows:\\
(1) Compute priority of each vertex $D$ in \textbf{Abs};\\
(2) Compute $\textit{NES}$ for $\textit{Leave}$ firstly, then compute backward inductively for $\textit{Non-Leave}$.\\
The framework of $\mathbf{AlgSOS}()$ is similar.

Pseudo code of $\mathbf{AlgNES}()$ is shown in Algorithm \ref{algorithm}.

\begin{algorithm}[h]
\scriptsize
 \KwData{\textbf{Abs}}
 \KwResult{$\textit{NES}$ of \textbf{Abs}}
 $\textit{NES}(\textbf{Abs})\leftarrow \emptyset$\;
 \For{$D\in \mathds{V}$(\begin{bf}{Abs}\end{bf})}{
 $prior(D)\leftarrow$ ComputePrior($D$)\;
 }
 List $\mathcal{L}\leftarrow$ list of $\mathds{V}(\textbf{Abs})$ in descending order on priority\;
 pointer $p\leftarrow \mathcal{L}$\;
 \While{$p$ is not the tail of $\mathcal{L}$}{
 $D\leftarrow p.data$\;
 \While{prior(D) is the highest in $\mathcal{L}$}{
 $\textit{NES}(\textbf{Abs})\cup\leftarrow \begin{bf}NESinLeave\end{bf}(D)$\;
 $p\leftarrow p.next$\;
 $D\leftarrow p.data$\;
 }
 $\textit{NES}(\textbf{Abs})\cup\leftarrow \begin{bf}NESinNonLeave\end{bf}(D)$\;
 $p\leftarrow p.next$\;
 $D\leftarrow p.data$\;
 }
 \caption{Pseudo code of $\mathbf{AlgNES}$()}
 \label{algorithm}
\end{algorithm}

\textbf{NES/SOS for Leave}\\
The key point of computing $\textit{NES}$ (or $\textit{SOS}$) for $\textit{Leave}$ $D$ is to find a cycle in $D$ satisfying conditions $\mathds{A}$ and $\mathds{B}$ as above.

\textbf{NES in Leave}: The method of finding $\textit{NES}$ for $\textit{Leave}$ $D$ is a value iteration method, denoted as \textbf{NESinLeave}($D$). The value function is \textbf{BackInd}($G_{\textit{i}}$) which returns some edge $e$ of $G_{\textit{i}}$ and \textbf{RefN}($G_{\textit{i}}$) is used to refresh the value of the weight pair for each edge of $G_{\textit{i}}$, $\forall G_{\textit{i}}\in \mathcal{V}(D)$.

As the narrative convenience, we introduce some auxiliary symbols: $\forall e\in E(G_{\textit{i}})$, the weight pair initializes with $L_{0}(e)=L_{\textit{WeiP}}(e)$, and $L_{\textit{n}}(e)=(L_{\textit{n}}^a(e), L_{\textit{n}}^d(e))$ is used to keep the new weight pair of $e$ obtained by \textbf{RefN}($G_{\textit{i}}$) on the nth iteration; $\forall G_{\textit{i}}\in \mathcal{V}(D)$, $Pp_{\textit{n}}(G_{i})=(Pp_{\textit{n}}^a(G_{i}), Pp_{\textit{n}}^d(G_{i}))$, initialized with $Pp_0(G_{i})=(0,0)$, is used to keep $L_{\textit{n}}(e)$, where $e$ is the result of \textbf{BackInd}($G_{\textit{i}}$) on the nth iteration. The iterative process will be continued until $\forall G_{\textit{i}}\in \mathcal{V}(D)$, $Pp_{\textit{n}}(G_{i})=Pp_{\textit{n+1}}(G_{i})$.

The framework of \textbf{NESinLeave}($D$) is as follows:\\
(1) Value iteration initializes with \textbf{BackInd}($G_{\textit{i}}$), where for each $G_{\textit{i}}\in \mathcal{V}(D)$, the weight pair of $e\in E(G_{\textit{i}})$ is $L_0(e)$. Assuming $e$ is the result obtained by \textbf{BackInd}($G_{\textit{i}}$), then $Pp_0(G_{\textit{i}})=L_0(e)$;\\
(2) Loops through the method \textbf{RefN}($G_{\textit{i}}$) and \textbf{BackInd}($G_{\textit{i}}$) by order until $\forall G_{\textit{i}}$, $Pp_{\textit{n+1}}(G_{\textit{i}}) =Pp_{\textit{n}}(G_{\textit{i}})$;\\
(3) $\forall G_{\textit{i}}$, execute \textbf{BackInd}($G_{\textit{i}}$). The cycle obtained is what we want.

Rules of method \textbf{BackInd}$(G_{\textit{i}})$ on the nth iteration,  $n\geq0$:\\
(1) Let $E'=E(G_{\textit{i}})$;\\
(2) If $\exists e_1, e_2\in E'$ satisfying $L_{\textit{Act}}^a(e_1)=L_{\textit{Act}}^a(e_2)$, refresh $E'$ by filtering the edge $e\neq\arg\underset{e\in \{e_1,e_2\}}{\max} L_{n}^d(e)$;\\
(3) Refresh $E'$ by keeping edge $e=\arg\underset{e\in E'}{\max} L_{n}^a(e)$\\
(4) Return $e$.

Rules in method \textbf{RefN}$(G_{\textit{i}})$ on the (n+1)th iteration, $n\geq0$:\\
(1) $\forall e\in E(G_{\textit{i}})$, compute its  $L_{\textit{n+1}}(e)$ componentwise by following formula:
$$L_{\textit{n+1}}(e_{\textit{ij}})=L_{\textit{WeiP}}(e_{\textit{ij}})+\beta \cdot L_{\textit{TranP}}(e_{\textit{ij}})\cdot Pp_{\textit{n}}(G_{\textit{j}})$$
(2) Keep $L_{\textit{n+1}}(e_{\textit{ij}})$, $\forall e_{\textit{ij}}\in E(G_{\textit{i}})$;

Pseudo code of \textbf{NESinLeave}(), \textbf{BackInd}() and \textbf{RefN}() are shown in Algorithm \ref{nesinleave}, \ref{backwardinduction} and \ref{refreshweip} respectively.
\begin{algorithm}[!h]
\scriptsize
 \KwData{$\textit{Leave}$ of \textbf{Abs}: $D$}
 \KwResult{$\textit{NES}$ of $D$ }
 Label $G_{\textit{i}}\in \mathcal{V}(D)$ with \begin{bf}NonConducted\end{bf}\;
 $\textit{NES}(D)\leftarrow \emptyset$\;
 \While{$\exists G_{\textit{i}}$ is \begin{bf}NonConducted\end{bf}}{
 $e\leftarrow \begin{bf}BackInd\end{bf}(G_{\textit{i}})$\;
 $Pp_{0}(G_{\textit{i}})\leftarrow L_{0}(e)$\;
 Label $G_{\textit{i}}$ with \begin{bf}Conducted\end{bf}\;
 }
 \While{$\exists G_{\textit{i}}$ $Pp_{\textit{n}}(G_{\textit{i}})\neq Pp_{\textit{n+1}}(G_{\textit{i}})$ componentwise}{
 \begin{bf}RefN\end{bf}($G_{\textit{i}}$) \;
 $e\leftarrow \begin{bf}BackInd\end{bf}(G_{\textit{i}})$\;
 $Pp_{\textit{n+1}}(G_{\textit{i}})\leftarrow L_{\textit{n+1}}(e)$\;
 }
 $\textit{NES}(D)\leftarrow \{e|e\leftarrow \begin{bf}BackInd\end{bf}(G_{\textit{i}}), G_{\textit{i}}\in \mathcal{V}(D)\}$\;
 \caption{Pseudo code of $\mathbf{NESinLeave}$()}
 \label{nesinleave}
\end{algorithm}

\begin{algorithm}[!h]
\scriptsize
 \KwData{$G_{\textit{i}}\in \mathcal{V}(D)$}
 \KwResult{edge $e\in E(G_{\textit{i}})$}
 create $E'\leftarrow E(G_{\textit{i}})$\;
 \While{$\forall e_{1}, e_{2}\in E'$ with $L_{\textit{Act}}^a(e_{1})=L_{\textit{Act}}^a(e_{2})$}{
 $E'\leftarrow E'\backslash \{e\mid e\neq\underset{e\in \{e_{1},e_{2}\}}{\arg\max}\{L_{\textit{n}}^d(e)\}\}$\;
 }
 $e\leftarrow\underset{e\in E'}{\arg\max}\{L_{\textit{n}}^a(e)\}$\;
 return $e$\;
 \caption{Pseudo code of $\mathbf{BackInd}$()}
 \label{backwardinduction}
\end{algorithm}

\begin{algorithm}[!h]
\scriptsize
 \KwData{$G_{\textit{i}}\in \mathcal{V}(D)$}
 \KwResult{$L_{\textit{n+1}}(e_{\textit{ij}})$, $\forall e_{\textit{ij}}\in E(G_{\textit{i}})$}
 Label all $e_{\textit{ij}}\in E(G_{\textit{i}})$ with \begin{bf}NonRef\end{bf}\;
 \While{$\exists e_{\textit{ij}}$ is \begin{bf}NonRef\end{bf}}{
 $L_{\textit{n+1}}^a(e_{\textit{ij}})\leftarrow L_{\textit{WeiP}}^a(e_{\textit{ij}})+ \beta\cdot L_{\textit{TranP}}(e_{\textit{ij}})\cdot Pp_{\textit{n}}^a(G_{\textit{j}})$\;
 $L_{\textit{n+1}}^d(e_{\textit{ij}})\leftarrow L_{\textit{WeiP}}^d(e_{\textit{ij}})+ \beta\cdot L_{\textit{TranP}}(e_{\textit{ij}})\cdot Pp_{\textit{n}}^d(G_{\textit{j}})$\;
 Label $e_{\textit{ij}}$ with \begin{bf}Ref\end{bf}\;
 }
 \caption{Pseudo code of $\mathbf{RefN}$()}
 \label{refreshweip}
\end{algorithm}
%$\mathbf{AlgSOS}()$ is identical to that of $\mathbf{AlgNES}()$ except for substituting method \textbf{NESinLeave}() with \textbf{SOSinLeave}() and \textbf{NESinNonLeave}() with \textbf{SOSinNonLeave}().

\textbf{SOS in Leave}: The method \textbf{SOSinLeave}($D$) used to find $\textit{SOS}$ for $\textit{Leave}$ $D$ is also a value iteration method. The \textit{value function} is \textbf{LocSoOp}($G_{\textit{i}}$) which returns some edge $e$ of $G_{\textit{i}}$ and \textbf{RefS}($G_{\textit{i}}$) is used to refresh the absolute sum value of the weight pair for each edge of $G_{\textit{i}}$, $\forall G_{\textit{i}}\in \mathcal{V}(D)$.

Here are some other auxiliary symbols for convenience: $\forall e\in E(G_{\textit{i}})$, its sum of absolute weight pair initializes with $L_0^{\textit{S}}(e)=L_{\textit{WeiP}}^S(e)$, and $L_{\textit{n}}^{\textit{S}}(e)$ is used to keep the new sum of absolute weight pair of $e$ obtained by \textbf{RefS}($G_{\textit{i}}$) on the nth iteration; $Ps_{\textit{n}}(G_{i})$ initialized with $Ps_0(G_{i})=0$, is used to keep $L_{\textit{n}}^{\textit{S}}(e)$, where $e$ is the result of \textbf{LocSoOp}($G_{\textit{i}}$) on the nth iteration. The iterative process will be continued until $\forall G_{\textit{i}}\in \mathcal{V}(D)$, $Ps_{\textit{n}}(G_{i})=Ps_{\textit{n+1}}(G_{i})$.

The framework of \textbf{SOSinLeave}($D$) is as follows:\\
(1) Value iteration initializes with \textbf{LocSoOp}($G_{\textit{i}}$), where for $G_{\textit{i}}\in \mathcal{V}(D)$, the sum of absolute weight pair of $e\in E(G_{\textit{i}})$ is $L_0^{\textit{S}}(e)$. Assuming the result obtained by \textbf{LocSoOp}($G_{\textit{i}}$) is $e$, then $Ps_0(G_{\textit{i}})=L_0^{\textit{S}}(e)$;\\
(2) Loops through the method \textbf{RefS}($G_{\textit{i}}$) and \textbf{LocSoOp}($G_{\textit{i}}$) by order until $\forall G_{\textit{i}}$, $Ps_{\textit{n+1}}(G_{\textit{i}}) =Ps_{\textit{n}}(G_{\textit{i}})$;\\
(3) $\forall G_{\textit{i}}$, execute \textbf{LocSoOp}($G_{\textit{i}}$). The cycle obtained is what we want.

Rules of method \textbf{LocSoOp}$(G_{\textit{i}})$ on nth iteration, $n\geq0$:\\
(1) Compare $L_n^S(e)$, $\forall e\in E(G_{\textit{i}})$;\\
(2) Return edge $e=\arg\underset{e\in E(G_{\textit{i}})}{\min} \{L_n^S(e)\}$.

Rules of method \textbf{RefS}$(G_{\textit{i}})$ on (n+1)th iteration, $n\geq0$: \\
(1) $\forall e_{\textit{ij}}\in E(G_{\textit{i}})$, compute its  $L_{\textit{n+1}}^S(e_{\textit{ij}})$ by following formula:
$$L_{\textit{n+1}}^S(e_{\textit{ij}})=L_{\textit{WeiP}}^{S}(e_{\textit{ij}})+ \beta\cdot L_{\textit{TranP}}(e_{\textit{ij}})\cdot  Ps_{\textit{n}}(G_{\textit{j}})$$
(2) Keep $L_{\textit{n+1}}^S(e_{\textit{ij}})$, $\forall e\in E(G_{\textit{i}})$;

Pseudo code of \textbf{SOSinLeave}(), \textbf{LocSoOp}() and \textbf{RefS}() are given in Algorithm \ref{sosinleave}, \ref{localsocialoptimal} and \ref{refreshsum} respectively.
\begin{algorithm}[!h]
\scriptsize
 \KwData{$\textit{Leave}$ of \textbf{Abs}: $D$}
 \KwResult{$\textit{SOS}$ of $D$ }
 Label $G_{\textit{i}}\in \mathcal{V}(D)$ with \begin{bf}NonConducted\end{bf}\;
 $\textit{SOS}(D)\leftarrow \emptyset$\;
 \While{$\exists G_{\textit{i}}$ is \begin{bf}NonConducted\end{bf}}{
 $e\leftarrow$ \begin{bf}LocSoOp\end{bf}($G_{\textit{i}}$)\;
 $Ps_{0}(G_{\textit{i}})\leftarrow L_{0}^S(e)$\;
 Label $G_{\textit{i}}$ with \begin{bf}Conducted\end{bf}\;
 }
 \While{$\exists G_{\textit{i}}$ $Ps_{n}(G_{\textit{i}})\neq Ps_{n+1}(G_{\textit{i}})$}{
 \begin{bf}RefS\end{bf}($G_{\textit{i}}$) \;
 $e\leftarrow$ \begin{bf}LocSoOp\end{bf}($G_{\textit{i}}$)\;
 $Ps_{n+1}(G_{\textit{i}})\leftarrow L_{n+1}^S(e)$\;
 }
 $\textit{SOS}(D)\leftarrow \{e|e\leftarrow \begin{bf}LocSoOp\end{bf}(G_{\textit{i}}), G_{\textit{i}}\in \mathcal{V}(D)\}$\;
 \caption{Pseudo code of $\mathbf{SOSinLeave}$()}
 \label{sosinleave}
\end{algorithm}
\begin{algorithm}[!h]
\scriptsize
 \KwData{$G_{\textit{i}}\in \mathcal{V}(D)$}
 \KwResult{edge $e\in E(G_{\textit{i}})$}
 \While{$\exists e\in E(G_{\textit{i}})$ is not compared}{
 $e'\leftarrow\arg\underset{e\in E}{\min}\{L_{\textit{n}}^S(e)\}$\;}
 return $e'$\;
 \caption{Pseudo code of $\mathbf{LocSoOp}$()}
 \label{localsocialoptimal}
\end{algorithm}
\begin{algorithm}[!h]
\scriptsize
 \KwData{$G_{\textit{i}}\in \mathcal{V}(D)$}
 \KwResult{$L_{\textit{n+1}}^S(e_{\textit{ij}}), \forall e_{\textit{ij}}\in E(G_{\textit{i}})$}
 Label all $e_{\textit{ij}}\in G_{\textit{i}}$ with \begin{bf}NonRef\end{bf}\;
 \While{$\exists e_{\textit{ij}}$ is \begin{bf}NonRef\end{bf}}{
 $L_{\textit{n+1}}^S(e_{\textit{ij}})\leftarrow L_{\textit{WeiP}}^{S}(e_{\textit{ij}})+ \beta\cdot L_{\textit{TranP}}(e_{\textit{ij}})\cdot Ps_{\textit{n}}(G_{\textit{j}})$\;
 Label $e_{\textit{ij}}$ with \begin{bf}Ref\end{bf}\;
 }
 \caption{Pseudo code of $\mathbf{RefS}$()}
 \label{refreshsum}
\end{algorithm}

\textbf{NES/SOS for Non-Leave}\\
\textbf{NES of Non-Leave}: For $\textit{Non-Leave}$ vertex $D$ in \textbf{Abs}, the method of computing its NES is \textbf{NESinNonLeave}($D$) and its framework is as follows:\\
(1) if the size of $\mathcal{V}(D)$ is more than 1, we will pre-process $D$ with method \textbf{PrePro}($D$) firstly, then get its $\textit{NES}$ by \textbf{NESinLeave}($D$);\\
(2) if $\mathcal{V}(D)=\{G_{\textit{i}}\}$ for some $G_{\textit{i}}\in V$, then the $\textit{NES}$ of $D$ is the result obtained from \textbf{BackInd}($G_{\textit{i}}$) directly.

Rules in method \textbf{PrePro}($D$) are as follows:\\
(1) $D'$ is one direct successor of $D$ in \textbf{Abs}, and if the edge $e$ connecting $D$ and $D'$ is contributed by the connection between $G_{\textit{i}}\in \mathcal{V}(D)$ and $G_{\textit{j}}\in \mathcal{V}(D')$, then $L_{0}(e_{\textit{ij}})=L_{\textit{WeiP}}(e_{\textit{ij}})+\beta\cdot L_{\textit{TranP}}(e_{\textit{ij}})\cdot PF(\pi_{\textit{j}})$ componentwise, where $\pi_{\textit{j}}$ is the nash equilibrium execution of $G_{\textit{j}}$;\\
(2) Change $e$ to be the self-loop edge of $G_{\textit{i}}$.

Pseudo code of \textbf{NESinNonLeave}() and \textbf{PrePro}() are shown in Algorithm \ref{nesinnonleave} and Algorithm \ref{preprocess} respectively.
\begin{algorithm}[!h]
\scriptsize
 \KwData{$\textit{Non-Leave}$: $D$}
 \KwResult{$\textit{NES}$ of $D$}
 $\textit{NES}(D)\leftarrow \emptyset$\;
 \eIf{the size of $\mathcal{V}(D)$ is bigger than 1}{
 $D'\leftarrow$ \begin{bf}PrePro($D$)\end{bf}\;
 $\textit{NES}(D)\leftarrow $\begin{bf}NESinLeave($D'$)\end{bf}\;
 }
 {
 $\textit{NES}(D)\leftarrow $\begin{bf}BackInd\end{bf}($G_{\textit{i}}$), if $\mathcal{V}(D)=\{G_{\textit{i}}\}$\;
 }
 \caption{Pseudo code of $\mathbf{NESinNonLeave}()$}
 \label{nesinnonleave}
\end{algorithm}

\begin{algorithm}[!h]
\scriptsize
 \KwData{$\textit{Non-Leave}$: $D$}
 \KwResult{new $D'$}
 $E'\leftarrow E(G_{\textit{i}}),G_{\textit{i}}\in \mathcal{\mathcal{V}}(D)$\;
 \While{$\exists e_{\textit{ij}}\in E'$ with endpoint $G_{\textit{j}}\notin \mathcal{V}(D)$}{
 $L_{0}^a(e_{\textit{ij}})\leftarrow L_{\textit{WeiP}}^a(e_{\textit{ij}})+ \beta\cdot L_{\textit{TranP}}(e_{\textit{ij}})\cdot Pp^a(G_{\textit{j}})$\;
 $L_{0}^d(e_{\textit{ij}})\leftarrow L_{\textit{WeiP}}^d(e_{\textit{ij}})+ \beta\cdot L_{\textit{TranP}}(e_{\textit{ij}})\cdot Pp^d(G_{\textit{j}})$\;
 Change $e_{\textit{ij}}$ to be self-loop edge of $G_{\textit{i}}$\;
 }
 $D'\leftarrow (\mathcal{V}(D), E')$\;
 Return $D'$\;
 \caption{Pseudo code of $\mathbf{PrePro}()$}
 \label{preprocess}
\end{algorithm}
\textbf{SOS of Non-Leave}: The method \textbf{SOSinNonLeave}($D$) computing $\textit{SOS}$ for $\textit{Non-Leave}$ $D$ is identical to \textbf{NESinNonLeave}($D$) except for the preprocessing method \textbf{PreProS}($D$). The computing steps of \textbf{PreProS}($D$) are as follows: \\
(1) $D'$ is one direct successor of $D$ in \textbf{Abs}, if the edge $e$ connecting $D$ and $D'$ is contributed by connection between $G_{\textit{i}}\in \mathcal{V}(D)$ and $G_{\textit{j}}\in \mathcal{V}(D')$, then $L_{0}^S(e_{\textit{ij}})=L_{\textit{WeiP}}^{S}(e_{ij})+\beta\cdot L_{\textit{TranP}}(e_{\textit{ij}})\cdot PF^{S}(\pi_{\textit{j}})$, where $\pi_{\textit{j}}$ is social optimal execution of $G_{\textit{j}}$;\\
(2) Change $e$ to be self-loop edge of $G_{\textit{i}}$.

Pseudo code of \textbf{SOSinNonLeave}() and \textbf{PreProS}() is shown in Algorithm \ref{sosinnonleave} and Algorithm \ref{preprocessforSOS} respectively in Appendix.
\begin{algorithm}[!h]
\scriptsize
 \KwData{$\textit{Non-Leave}$: $D$}
 \KwResult{$\textit{SOS}$ of $D$}
 $\textit{SOS}(D)\leftarrow \emptyset$\;
 \eIf{the size of $\mathcal{V}(D)$ is bigger than 1}{
 $D'\leftarrow$ \begin{bf}PreProS($D$)\end{bf}\;
 $\textit{SOS}(D)\leftarrow $\begin{bf}SOSinLeave($D'$)\end{bf}\;
 }
 {
 $\textit{SOS}(D)\leftarrow $\begin{bf}LocSoOp\end{bf}($G_{\textit{i}}$), if $\mathcal{V}(D)=\{G_{\textit{i}}\}$\;
 }
 \caption{Pseudo code of $\mathbf{SOSinNonLeave}()$}
 \label{sosinnonleave}
\end{algorithm}
\begin{algorithm}[!h]
\scriptsize
 \KwData{$\textit{Non-Leave}$: $D$}
 \KwResult{new $D'$}
 $E'\leftarrow E(G_{\textit{i}}),G_{\textit{i}}\in \mathcal{V}(D)$\;
 \While{$\exists e_{\textit{ij}}\in E'$ with endpoint $G_{\textit{j}}\notin \mathcal{V}(D)$}{
 $L_{0}^S(e_{\textit{ij}})\leftarrow L_{\textit{WeiP}}^{S}(e_{\textit{ij}})+ \beta\cdot L_{\textit{TranP}}(e_{\textit{ij}})\cdot Ps(G_{\textit{j}})$\;
 Change $e_{\textit{ij}}$ to be self-loop edge of $G_{\textit{i}}$\;
 }
 $D'\leftarrow (\mathcal{V}(D), E')$\;
 Return $D'$\;
 \caption{Pseudo code of $\mathbf{PreProS}()$}
 \label{preprocessforSOS}
\end{algorithm}

\subsection{Correctness of Algorithms}

\textbf{Correctness of \textbf{NESinLeave}()}\\
Inspired by a technique in dynamic programming which is called \textit{value-iteration} \cite{vander, ls}, \textbf{BackInd}($D$) is formalized as a mapping $\sigma: \mathcal{V}(D)\rightarrow \mathds{R}\times \mathds{R}$, on kth iteration, $\sigma_{k}(G_{\textit{i}})=(\sigma_{k}(G_{\textit{i}})^a, \sigma_{k}(G_{\textit{i}})^d)=Pp_{k}(G_{\textit{i}})$. \textbf{RefN}() defines a set of vertex $\{G_{\textit{i}}(\sigma_{k})\mid G_{\textit{i}}(\sigma_{k})$ denotes $G_{\textit{i}}$ with $e_{\textit{ij}}$ whose weight pair is refreshed by the rule componentwise: $L_{\textit{k+1}}(e_{\textit{ij}})=L_{\textit{WeiP}}(e_{\textit{ij}})+ \beta\cdot L_{\textit{TranP}}(e_{\textit{ij}})\cdot \sigma_{k}(G_{\textit{j}})\}$. According to the rules of \textbf{NESinLeave}($D$),  $\sigma_{k+1}(G_{\textit{i}})=Pp_{1}(G_{i}(\sigma_{k}))$ for any $G_{\textit{i}}\in \mathcal{V}(D)$. It is convenient to define the shorthand operator notation $(T\sigma)(G_{\textit{i}})=Pp_{1}(G_{i}(\sigma))$, that is $T\sigma_{k}=\sigma_{k+1}$.

\begin{Lemma}
\label{provenesterminatedpre}
For any $G_{\textit{i}}\in \mathcal{V}(D)$, we have
$$\mid \sigma_{k}(G_{\textit{i}})^a- T\sigma_{k}(G_{\textit{i}})^a\mid \leq \underset{e\in E(G_{\textit{i}})}{\max} \mid L_{k}^a(e)- L_{k+1}^a(e)\mid$$
$$\mid \sigma_{k}(G_{\textit{i}})^d- T\sigma_{k}(G_{\textit{i}})^d\mid \leq \underset{e\in E(G_{\textit{i}})}{\max} \mid L_{k}^d(e)- L_{k+1}^d(e)\mid$$
\end{Lemma}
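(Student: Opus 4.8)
The plan is to unwind the definitions of $\sigma_k$, $T\sigma_k$, and the operator $\mathbf{BackInd}$, and to reduce the claimed inequality to a statement about the edge selected by $\mathbf{BackInd}(G_{\textit{i}})$ on two consecutive iterations. Write $e_k$ for the edge returned by $\mathbf{BackInd}(G_{\textit{i}})$ when the weight pairs are $L_k(\cdot)$, so that $\sigma_k(G_{\textit{i}}) = Pp_k(G_{\textit{i}}) = L_k(e_k)$, and similarly $T\sigma_k(G_{\textit{i}}) = L_{k+1}(e_{k+1})$, where $e_{k+1}$ is the edge returned by $\mathbf{BackInd}$ applied after the refresh $\mathbf{RefN}$ has produced the pairs $L_{k+1}(\cdot)$. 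The goal is then to bound $\mid L_k^a(e_k) - L_{k+1}^a(e_{k+1})\mid$ (and its $d$-analogue) by $\max_{e\in E(G_{\textit{i}})}\mid L_k^a(e) - L_{k+1}^a(e)\mid$.

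The key step is an elementary lemma about $\arg\max$ under a bounded perturbation: if two real-valued functions $f,g$ on a finite set $X$ satisfy $\mid f(x) - g(x)\mid \le \delta$ for all $x$, then $\mid \max_x f(x) - \max_x g(x)\mid \le \delta$, and moreover the value of $g$ at the $f$-maximizer differs from $\max_x g$ by at most $2\delta$ — but for our purposes the first form (applied to the selection value, not to the selected edge) is what is needed. I would apply this in two layers matching the two-stage structure of $\mathbf{BackInd}$: first to the inner filtering step, where among each pair $e_1,e_2$ with $L_{\textit{Act}}^a(e_1)=L_{\textit{Act}}^a(e_2)$ we keep the one maximizing the $d$-component, observing that the surviving edge set $E'$ may a priori differ between iteration $k$ and $k+1$ but in either case the retained $d$-value is within $\delta^d := \max_e \mid L_k^d(e)-L_{k+1}^d(e)\mid$ of the corresponding quantity; then to the outer step, which picks $e = \arg\max_{e\in E'} L^a(e)$, so that the chosen $a$-value tracks within $\delta^a := \max_e\mid L_k^a(e)-L_{k+1}^a(e)\mid$. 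Since $\sigma_k(G_{\textit{i}})$ and $T\sigma_k(G_{\textit{i}})$ are exactly these extremal values (not arbitrary values at a moved argument), the bound comes out as stated rather than with a factor of $2$.

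The main obstacle I anticipate is handling the case where the filtered edge set $E'$ genuinely changes between the two iterations, so that $e_k$ and $e_{k+1}$ are distinct edges — one must argue that the \emph{value} attained is still close, even though the \emph{maximizer} jumps. The cleanest way around this is to never reason about which edge is chosen, but only about the scalar $\max$ over a fixed ground set $E(G_{\textit{i}})$ (or over $E'$), invoking $\mid \max_x f - \max_x g\mid \le \max_x\mid f-g\mid$ each time; the two-stage $\max$/$\arg\max$ then composes because at each stage the quantity being compared is a genuine maximum of functions that are pointwise $\delta$-close. A secondary nuisance is the componentwise bookkeeping: the $a$-component of the final value depends, through the filtering stage, on the $d$-components, so one should be slightly careful that the $d$-perturbation does not leak into the $a$-bound — but since filtering only restricts the index set and the subsequent $a$-maximum over any subset of $E(G_{\textit{i}})$ is still within $\delta^a$ of the corresponding $a$-maximum over that same subset, the two components decouple in the final estimate. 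I would present the abstract $\max$-perturbation lemma first, then instantiate it twice, and conclude.
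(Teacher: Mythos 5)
Your reduction of the \emph{outer} step of \textbf{BackInd} to the perturbation bound $\mid \max_x f - \max_x g\mid \le \max_x \mid f - g\mid$ is fine, but the argument breaks at the \emph{inner} filtering step, and the ``decoupling'' you claim in your last paragraph is exactly where the gap sits. In step (2) of \textbf{BackInd}, within each class of edges sharing an attacker action, the surviving edge is the $\arg\max$ of the \emph{$d$-component}, and what then feeds into the outer $a$-maximization is the \emph{$a$-value at that $d$-argmax}. That quantity is not a maximum of $L^a$ over a fixed subset: between iterations $k$ and $k+1$ the $d$-argmax can move to a different edge of the class, and the $a$-value at the new edge is not controlled by $\delta^a = \max_e \mid L_k^a(e) - L_{k+1}^a(e)\mid$. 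Concretely, take a single class $\{e_1, e_2\}$ with $L_k^d(e_1)=1$, $L_k^d(e_2)=0$, $L_{k+1}^d(e_1)=0$, $L_{k+1}^d(e_2)=1$, and $L_k^a(e_1)=L_{k+1}^a(e_1)=0$, $L_k^a(e_2)=L_{k+1}^a(e_2)=100$: then $\delta^a=0$ but the selected $a$-value jumps from $0$ to $100$. So the first inequality of the lemma is simply false for arbitrary weight data, and no proof that uses only pointwise closeness of the two labelings can succeed.

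What saves the lemma --- and what your proposal never invokes --- is the paper's standing convention that the two components of a weight pair are anti-monotone across edges ($L^a(e) > L^a(e')$ implies $L^d(e) < L^d(e')$, tacitly assumed to persist for the refreshed labels $L_k$); my example above violates it at iteration $k+1$. Under that convention the $d$-argmax within a class is the $a$-argmin, so the inner stage becomes a genuine $\min$ of $L^a$ over a \emph{fixed} class, and your two-layer perturbation argument then does go through --- arguably more cleanly than the paper's own proof, which proceeds by contradiction through a case analysis (whether the two selected edges share an attacker action, and whether each survives the filter) and manipulates the order relations between $a,b$ and between $a',b'$ that the convention supplies. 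As written, however, your proof purports to establish a statement that is false in general and is missing the one hypothesis that makes the lemma true.
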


\begin{proof}
We prove by contradiction for the first inequality in details. According to the rules in \textbf{BackInd}($G_{\textit{i}}$), we need to consider all possible results obtained by \textbf{BackInd}($G_{\textit{i}}$) on kth and (k+1)th iteration respectively. The details are shown in Appendix. The proof for the second inequality is similar.
\end{proof}

\begin{Lemma}
\label{provenesterminated}
$T$ is a contraction.
\end{Lemma}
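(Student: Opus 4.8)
The plan is to show that the operator $T$ defined by $(T\sigma)(G_{\textit{i}})=Pp_{1}(G_{\textit{i}}(\sigma))$ is a contraction with respect to the supremum norm on $\mathcal{V}(D)\to\mathds{R}\times\mathds{R}$, with modulus $\beta$. Concretely, I will fix two valuations $\sigma,\sigma'$ and show, componentwise,
\[
\underset{G_{\textit{i}}\in\mathcal{V}(D)}{\max}\,\bigl|(T\sigma)(G_{\textit{i}})^a-(T\sigma')(G_{\textit{i}})^a\bigr|\;\leq\;\beta\cdot\underset{G_{\textit{j}}\in\mathcal{V}(D)}{\max}\,\bigl|\sigma(G_{\textit{j}})^a-\sigma'(G_{\textit{j}})^a\bigr|,
\]
and similarly for the $d$-component. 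The interesting content is packaged in Lemma \ref{provenesterminatedpre}: the output of \textbf{BackInd} is a \emph{selection} of an edge obtained by iterated $\arg\max$ over edge-labels $L_{k}^a,L_{k}^d$, and that lemma already tells us that perturbing the weight labels by at most $\delta$ perturbs $Pp_{1}$ by at most $\delta$ (i.e.\ the selection step is $1$-Lipschitz in the sup-norm over the edge set of a single vertex).

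First I would make the connection between the two lemmas explicit. Applying Lemma \ref{provenesterminatedpre} not to the iterates $\sigma_{k},\sigma_{k+1}$ but to an arbitrary pair $\sigma,\sigma'$ (the proof there only uses that $T\sigma=\sigma_{k+1}$ arises by one round of \textbf{RefN} followed by \textbf{BackInd}, which is exactly $(T\,\cdot\,)$), we get for every $G_{\textit{i}}$
\[
\bigl|(T\sigma)(G_{\textit{i}})^a-(T\sigma')(G_{\textit{i}})^a\bigr|\;\leq\;\underset{e\in E(G_{\textit{i}})}{\max}\,\bigl|L^a(e;\sigma)-L^a(e;\sigma')\bigr|,
\]
where $L^a(e_{\textit{ij}};\sigma)=L_{\textit{WeiP}}^a(e_{\textit{ij}})+\beta\cdot L_{\textit{TranP}}(e_{\textit{ij}})\cdot\sigma(G_{\textit{j}})^a$ is the refreshed label from \textbf{RefN}. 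Then I would bound the right-hand side directly: the $L_{\textit{WeiP}}^a$ term cancels, so
\[
\bigl|L^a(e_{\textit{ij}};\sigma)-L^a(e_{\textit{ij}};\sigma')\bigr|=\beta\cdot L_{\textit{TranP}}(e_{\textit{ij}})\cdot\bigl|\sigma(G_{\textit{j}})^a-\sigma'(G_{\textit{j}})^a\bigr|\leq\beta\cdot\underset{G_{\textit{j}}\in\mathcal{V}(D)}{\max}\bigl|\sigma(G_{\textit{j}})^a-\sigma'(G_{\textit{j}})^a\bigr|,
\]
using $L_{\textit{TranP}}(e_{\textit{ij}})=\dot{p}(s_{\textit{i}},u,v,s_{\textit{j}})\in[0,1]\leq 1$. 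Taking the max over $e\in E(G_{\textit{i}})$ and then over $G_{\textit{i}}\in\mathcal{V}(D)$ gives the claimed contraction inequality for the $a$-component; the $d$-component is identical, using the second inequality of Lemma \ref{provenesterminatedpre}. Since a contraction with modulus $\beta<1$ on each component yields a contraction on the product with the max-metric, $T$ is a contraction.

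I would also note in passing why $\mathcal{V}(D)\to\mathds{R}\times\mathds{R}$ with the sup-metric is a complete metric space (it is finite-dimensional, $D$ being a finite SCC), so that Banach's fixed-point theorem applies and the iteration $\sigma_{k+1}=T\sigma_{k}$ converges — this is what makes \textbf{NESinLeave}() terminate in the limiting sense and is presumably the reason the lemma is stated. The main obstacle is not this final estimate, which is routine once Lemma \ref{provenesterminatedpre} is in hand, but rather making sure the hypothesis of Lemma \ref{provenesterminatedpre} genuinely covers an \emph{arbitrary} pair of valuations and not only consecutive iterates; I would either restate Lemma \ref{provenesterminatedpre} for arbitrary $\sigma,\sigma'$ (its contradiction-based proof via case analysis on the $\arg\max$ selections in \textbf{BackInd} does not use consecutiveness) or invoke it in that generality. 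A secondary subtlety worth a sentence is that \textbf{BackInd} performs a two-stage selection (first filtering by $\max L^d$ among edges sharing an attacker-action, then $\max L^a$), so the Lipschitz bound on $Pp_{1}^a$ must be argued to survive the $d$-filtering stage — but that stage only ever removes edges, and removing edges cannot increase the gap beyond $\max_{e}|L^a(e;\sigma)-L^a(e;\sigma')|$, which is exactly the content Lemma \ref{provenesterminatedpre} already certifies.
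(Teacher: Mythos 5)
Your proposal is correct and follows essentially the same route as the paper: both reduce the claim to Lemma \ref{provenesterminatedpre} and then observe that the \textbf{RefN} update scales differences by $\beta\cdot L_{\textit{TranP}}(e_{\textit{ij}})\leq\beta$, yielding the sup-norm bound. The only (minor) difference is that you establish the contraction inequality for an arbitrary pair of valuations and invoke Banach's theorem explicitly, whereas the paper applies the same estimate only to consecutive iterates $\sigma_{k},\sigma_{k+1}$; your observation that Lemma \ref{provenesterminatedpre} should be read in the arbitrary-pair generality is a fair point but does not change the substance of the argument.
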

\begin{proof}
For any real vector $\overrightarrow{x}\in \mathds{R}^J$, $J$ is an index set, let $\mid \mid \overrightarrow{x}\mid\mid_{\infty}=\max_{j}|x_{j}|$. According to Lemma \ref{provenesterminatedpre}, then we have
\begin{align*}
\mid\mid T\sigma_{k+1}^a-T\sigma_{k}^a\mid\mid_{\infty}&=\underset{G_{i}\in V}{\max}\mid T\sigma_{k+1}(G_{i})^a-T\sigma_{k}(G_{i})^a \mid\\
&\leq\underset{G_{i}\in V}{\max}\underset{e_{\textit{ij}}\in E(G_{\textit{i}})}{\max} \mid L_{k+2}^a(e_{\textit{ij}})- L_{k+1}^a(e_{\textit{ij}})\mid\\
&\leq\underset{G_{j}\in V}{\max}\beta\cdot \mid   \sigma_{k+1}(G_{\textit{j}})^a-\sigma_{k}(G_{\textit{j}})^a \mid\\
&=\beta\cdot\mid\mid \sigma_{k+1}^a-\sigma_{k}^a\mid\mid_{\infty}
\end{align*}
similar proof for $\mid\mid T\sigma_{k+1}^d-T\sigma_{k}^d\mid\mid_{\infty}\leq\beta\cdot\mid\mid \sigma_{k+1}^d-\sigma_{k}^d\mid\mid_{\infty}$. Therefore, we claim that $\exists \sigma_{*}$, satisfying $T\sigma_{*}=\sigma_{*}$.
\end{proof}

\begin{Theorem}
\label{provenesinleave}
If $D$ is a $\textit{Leave}$ of \textbf{Abs}, then the result obtained by \textbf{NESinLeave}($D$) is $\textit{NES}$ of $D$.
\end{Theorem}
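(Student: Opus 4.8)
The plan is to show that the value iteration carried out inside \textbf{NESinLeave}($D$) converges to a fixed point of $T$ whose induced one-out-edge-per-vertex selection is, at every vertex, a Nash Equilibrium Execution. First I would record that, by Lemma~\ref{provenesterminated}, $T$ is a $\beta$-contraction on the complete metric space $\big((\mathds{R}\times\mathds{R})^{\mathcal{V}(D)},\,d\big)$ with $d(\sigma,\sigma')=\max\{\|\sigma^a-\sigma'^a\|_\infty,\|\sigma^d-\sigma'^d\|_\infty\}$, the joint contraction being immediate from the two component estimates of that lemma. By the Banach fixed-point theorem the iterates $\sigma_0=Pp_0=(0,0)$, $\sigma_{k+1}=T\sigma_k$ produced by the loop converge to the unique $\sigma_*$ with $T\sigma_*=\sigma_*$; in exact arithmetic the stopping test ``$Pp_{n+1}(G_i)=Pp_n(G_i)$ for all $G_i$'' is precisely the assertion $\sigma_n=\sigma_*$, and since $\mathcal{V}(D)$ is finite the greedy edge selection made by \textbf{BackInd} can only take finitely many values, so (with a fixed deterministic tie-break) the selected policy, hence the loop, stabilises.

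Next I would extract from $\sigma_*$ a family of executions and identify its values with payoffs. Let every edge $e=e_{ik}\in E(G_i)$ carry the refreshed pair $L_*(e):=L_{\textit{WeiP}}(e)+\beta\cdot L_{\textit{TranP}}(e)\cdot\sigma_*(G_k)$ (for a \textit{Leave} every out-edge stays inside $D$, so $G_k\in\mathcal{V}(D)$), and let $e_i=e_{ij}$ be the output of \textbf{BackInd}($G_i$) under these weights. Since $D$ is a single SCC, the spanning subgraph in which $G_i$ keeps only $e_i$ gives every vertex out-degree $1$; hence each $G_i$ has a unique execution $\pi_i$ in it (a walk ending in a cycle), with $\pi_i=G_i\,e_{ij}\,\pi_j$ and $\pi_i[j]=\pi_j$. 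Unfolding $\sigma_*=T\sigma_*$ componentwise yields, for every $i$,
\[
\sigma_*(G_i)=L_{\textit{WeiP}}(e_{ij})+\beta\cdot L_{\textit{TranP}}(e_{ij})\cdot\sigma_*(G_j),
\]
which is exactly the recursive system that the definition of payoff imposes on $\big(PF^a(\pi_i),PF^d(\pi_i)\big)_i$. By Theorem~\ref{payoffconverged} those payoffs exist, and they are the unique solution of this system: on the terminating cycle of each $\pi_i$ the accumulated discount $\prod\beta\cdot L_{\textit{TranP}}$ is $<1$ (every factor $\beta<1$, $L_{\textit{TranP}}\le 1$), so the cyclic linear subsystem is nonsingular and the pre-cycle values are then fixed by back-substitution. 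Hence $\sigma_*(G_i)^a=PF^a(\pi_i)$ and $\sigma_*(G_i)^d=PF^d(\pi_i)$ for all $i$.

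Then I would verify the coinductive NEE conditions for each $\pi_i$. With the identities above, the quantities compared inside \textbf{BackInd}($G_i$) are $L_*^a(e_{ik})=L^a_{\textit{WeiP}}(e_{ik})+\beta L_{\textit{TranP}}(e_{ik})PF^a(\pi_k)$ and $L_*^d(e_{ik})=L^d_{\textit{WeiP}}(e_{ik})+\beta L_{\textit{TranP}}(e_{ik})PF^d(\pi_k)$, exactly the expressions in the definition of NEE. Reading off the two stages of \textbf{BackInd}: step (2) reduces $E(G_i)$ to the edges that, within each attacker-action class $E^a_{e''}(G_i)$, maximise $L_*^d$, i.e.\ to the set $E'(G_i)$ of that definition; step (3) returns the $L_*^a$-maximiser of $E'(G_i)$. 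Thus $e_{ij}\in E'(G_i)$ with $L_*^a(e_{ij})=\max_{e'\in E'(G_i)}L_*^a(e')$, which reads $PF^a(\pi_i)=\max_{e_{ik}\in E'(G_i)}\{L^a_{\textit{WeiP}}(e_{ik})+\beta L_{\textit{TranP}}(e_{ik})PF^a(\pi_k)\}$; and as $e_{ij}\in E'(G_i)$ it is the $L_*^d$-maximiser of its own class $E^a_{e_{ij}}(G_i)$, giving $PF^d(\pi_i)=\max_{e_{ik}\in E^a_{e_{ij}}(G_i)}\{L^d_{\textit{WeiP}}(e_{ik})+\beta L_{\textit{TranP}}(e_{ik})PF^d(\pi_k)\}$. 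Both NEE equations hold, with each $\pi_k=\pi_i[k]$ drawn from the same family, so $\{\pi_k\}_{G_k\in\mathcal{V}(D)}$ is closed under the defining clauses of NEE; by coinduction each $\pi_i$ is \emph{the} NEE of $G_i$, and therefore the edge set $\{e_i\mid G_i\in\mathcal{V}(D)\}$ returned by \textbf{NESinLeave}($D$) induces at every state its own NEE, i.e.\ it is an $\textit{NES}$ of $D$.

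The hard part will be the bookkeeping that matches \textbf{BackInd}'s iteratively filtered set with the index sets $E^a_e(G_i)$ and $E'(G_i)$, together with the careful treatment of ties: one must fix a deterministic rule in both argmax operations so that $E'(G_i)$ is single-valued per attacker action, \textbf{BackInd}($G_i$) is genuinely a function, and the edge chosen at $\sigma_*$ coincides with the edges chosen along the tail of the iteration, so that the family $\{\pi_i\}$ is well-defined. A smaller gap to fill is the uniqueness of the payoff recursion's solution (used to pass from $\sigma_*$ to $PF$), handled by the cyclic-discount argument sketched above, and --- if one insists on finite termination rather than a limit statement --- the claim that the greedy policy induced by \textbf{BackInd} stabilises after finitely many iterations, which follows from finiteness of the policy set together with $\sigma_k\to\sigma_*$ once ties are broken consistently.
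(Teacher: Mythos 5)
Your proposal follows the same two-part skeleton as the paper's own proof (termination via the contraction property of $T$, then correctness by matching the rules of \textbf{BackInd} against the defining equations of \textit{NEE}), but it differs in two substantive ways. First, the paper argues part 2 by contradiction: it assumes some $\pi_i^{e'}$ beats $\pi_i^{e}$ in one of the two NEE clauses and derives a conflict with the filtering rules of \textbf{BackInd}; you instead verify the two max-equations directly by showing that step (2) of \textbf{BackInd} computes exactly $E'(G_i)$ and step (3) the $L^a$-maximiser over it. Second, and more valuably, you make explicit a step the paper leaves entirely implicit: that the converged values $\sigma_*(G_i)=Pp_*(G_i)$ actually equal $\bigl(PF^a(\pi_i),PF^d(\pi_i)\bigr)$ for the executions induced by the selected edges. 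Without this identification the paper's contradiction argument is not literally about payoffs of executions at all, so your uniqueness argument for the payoff recursion (nonsingularity of the cyclic subsystem because the accumulated discount on the terminal cycle is $<1$, then back-substitution) genuinely fills a gap. One caveat applies to both your write-up and the paper: a $\beta$-contraction guarantees only asymptotic convergence, so the exact stopping test $Pp_{n+1}(G_i)=Pp_n(G_i)$ need not be reached in finitely many iterations even after the greedy policy stabilises; you flag this honestly at the end, whereas the paper dismisses it as ``trivial,'' but neither text resolves it (one would have to restate termination as stabilisation of the \emph{selected edges} rather than of the values, or tolerate an $\varepsilon$-test).
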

\begin{proof}
We need to prove two issues: \\
1. \begin{bf}NESinLeave\end{bf}($D$) is terminated.\\
2. The execution of $G_{\textit{i}}$, $\forall G_{\textit{i}}\in \mathcal{V}(D)$, based on the result of \textbf{NESinLeave}($D$) is its nash equilibrium execution.\\
The details are shown in Appendix.
\end{proof}

\textbf{Correctness of \textbf{SOSinLeave}()}\\
The way to prove the correctness of \textbf{SOSinLeave}() is similar to that of \textbf{NESinLeave}(). We will give the outline of the proofs.

We can formalize \textbf{LocSoOp}($D$) as a mapping $\alpha': \mathcal{V}(D)\rightarrow \mathds{R}$, so on kth iteration, we have $\sigma_{k}'(G_{\textit{i}})=Ps_{k}(G_\textit{i})$. \textbf{RefS}() defines a set of vertex $\{G_{\textit{i}}(\sigma'_{k})\mid G_{\textit{i}}(\sigma'_{k})$ denotes $G_{\textit{i}}$ with $e_{\textit{ij}}$ whose sum of absolute weight pair is refreshed by the rule: $L_{\textit{k+1}}^{S}(e_{\textit{ij}})=L_{\textit{WeiP}}^{S} (e_{\textit{ij}})+ \beta\cdot L_{\textit{TranP}}(e_{\textit{ij}})\cdot \sigma'_{k}(G_{\textit{j}})\}$. According to the rules of  \textbf{SOSinLeave}($D$), for any $G_{\textit{i}}\in \mathcal{V}(D)$,  $\sigma_{k+1}'(G_{\textit{i}})=Ps_{1}(G_{i}(\sigma_{k}'))$. It is convenient to define another shorthand operator notation $(T'\sigma')(G_{\textit{i}})=Ps_{1}(G_{i}(\sigma'))$, that is $T'\sigma_{k}'=\sigma_{k+1}'$. By the same way as Lemma \ref{provenesterminatedpre} and Lemma \ref{provenesterminated}, we can prove operator $T'$ is a contraction.

\begin{Lemma}
\label{provesosterminatedpre}
For any $G_{\textit{i}}\in \mathcal{V}(D)$, we have\\
$$\mid \sigma_{k}'(G_{\textit{i}})- T'\sigma_{k}'(G_{\textit{i}})\mid \leq \underset{e\in E(G_{\textit{i}})}{\max} \mid L_{k}^S(e)- L_{k+1}^S(e)\mid$$
\end{Lemma}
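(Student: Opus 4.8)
The plan is to follow the template of the proof of Lemma~\ref{provenesterminatedpre}, but to exploit the fact that \textbf{LocSoOp} is considerably simpler than \textbf{BackInd}: it performs a single unconstrained minimization of $L_n^S$ over $E(G_i)$, rather than the two-stage ``filter by attacker action, then maximize attacker weight'' procedure used in the NES case. First I would unfold the notation. Writing $e^{(k)}$ and $e^{(k+1)}$ for the edges returned by \textbf{LocSoOp}($G_i$) on the $k$th and $(k{+}1)$th iterations, the rule of \textbf{LocSoOp} gives $e^{(k)} = \arg\min_{e\in E(G_i)} L_k^S(e)$ and $e^{(k+1)} = \arg\min_{e\in E(G_i)} L_{k+1}^S(e)$, while the definitions of $Ps$ and of the operator $T'$ give $\sigma_k'(G_i) = L_k^S(e^{(k)})$ and $T'\sigma_k'(G_i) = L_{k+1}^S(e^{(k+1)})$. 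Thus the statement to be proved is precisely $|\min_{e\in E(G_i)} L_k^S(e) - \min_{e\in E(G_i)} L_{k+1}^S(e)| \le \max_{e\in E(G_i)} |L_k^S(e) - L_{k+1}^S(e)|$, i.e.\ that minimization over the finite edge set $E(G_i)$ is non-expansive in the sup-norm.

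The key step is then a short case analysis on the sign of $\sigma_k'(G_i) - T'\sigma_k'(G_i)$ (this is where the contradiction phrasing used in Lemma~\ref{provenesterminatedpre} would enter, but the direct argument is shorter here). If $\sigma_k'(G_i) \ge T'\sigma_k'(G_i)$, optimality of $e^{(k)}$ for $L_k^S$ gives $L_k^S(e^{(k)}) \le L_k^S(e^{(k+1)})$, whence $0 \le \sigma_k'(G_i) - T'\sigma_k'(G_i) \le L_k^S(e^{(k+1)}) - L_{k+1}^S(e^{(k+1)}) \le \max_{e\in E(G_i)} |L_k^S(e) - L_{k+1}^S(e)|$; the opposite sign is symmetric, now using optimality of $e^{(k+1)}$ for $L_{k+1}^S$ and bounding by $L_{k+1}^S(e^{(k)}) - L_k^S(e^{(k)})$. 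Combining the two cases yields the claimed inequality.

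I do not expect a genuine obstacle: this lemma plays the same auxiliary role for $T'$ that Lemma~\ref{provenesterminatedpre} plays for $T$ (feeding into the contraction argument analogous to Lemma~\ref{provenesterminated}). The only points needing a word of care are that $\arg\min$ is well defined because \textbf{ConTS}, hence the subgraph on which \textbf{SOSinLeave} operates, has finitely many edges, with the bound being insensitive to how \textbf{LocSoOp} breaks ties since the argument only uses that $e^{(k)}$ attains $\min L_k^S$ and $e^{(k+1)}$ attains $\min L_{k+1}^S$; and that $L_k^S$ and $L_{k+1}^S$ are compared over the same edge set $E(G_i)$, which holds because \textbf{RefS} only overwrites weights and never alters $E(G_i)$.
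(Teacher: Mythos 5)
Your proposal is correct, and it fills in exactly what the paper leaves implicit: the paper's entire proof of Lemma~\ref{provesosterminatedpre} is the remark that it is ``similar to that of Lemma~\ref{provenesterminatedpre}'', and that earlier proof is a contradiction argument with an extensive case analysis (cases 1, 2.1--2.4) driven by the two-stage filtering in \textbf{BackInd}. You correctly observe that because \textbf{LocSoOp} is a single unconstrained $\arg\min$ of $L_n^S$ over $E(G_i)$, the statement collapses to the non-expansiveness of $\min$ over a finite set, $\mid \min_e L_k^S(e)-\min_e L_{k+1}^S(e)\mid \le \max_e\mid L_k^S(e)-L_{k+1}^S(e)\mid$, and your direct two-case sign argument (using optimality of $e^{(k)}$ for $L_k^S$ in one case and of $e^{(k+1)}$ for $L_{k+1}^S$ in the other) proves exactly that. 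The underlying idea is the same as the paper's --- bound the change in the iterate by comparing the optimizing edges of consecutive iterations --- but your direct phrasing is cleaner than a literal transcription of the paper's contradiction-style case analysis would be, and it makes transparent why the SOS case needs only the analogue of the paper's case 1/case 2.1 rather than all four subcases. Your closing remarks on finiteness of $E(G_i)$, tie-breaking, and the invariance of the edge set under \textbf{RefS} are the right hygiene checks; nothing is missing.
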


\begin{proof}
The proof is similar to that of Lemma \ref{provenesterminatedpre}.
\end{proof}

\begin{Lemma}
\label{provesosterminated}
$T'$ is a contraction.
\end{Lemma}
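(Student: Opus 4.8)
The plan is to reproduce the argument used for Lemma~\ref{provenesterminated}, which is in fact slightly easier here because the iteration of $\mathbf{SOSinLeave}$ tracks a single scalar $Ps_n(G_{\textit{i}})$ per vertex rather than a pair. First I would fix the supremum norm $\mid\mid \overrightarrow{x}\mid\mid_{\infty}=\max_j|x_j|$ on $\mathds{R}^{\mathcal{V}(D)}$, and aim to establish the contraction estimate
$$\mid\mid T'\sigma'_{k+1}-T'\sigma'_{k}\mid\mid_{\infty}\le \beta\cdot \mid\mid \sigma'_{k+1}-\sigma'_{k}\mid\mid_{\infty}.$$
Since $\beta\in(0,1)$ and $(\mathds{R}^{\mathcal{V}(D)},\mid\mid\cdot\mid\mid_{\infty})$ is complete, this makes $T'$ a contraction, and the Banach fixed-point theorem then yields a unique $\sigma'_{*}$ with $T'\sigma'_{*}=\sigma'_{*}$, which is the statement.

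For the key inequality I would argue as follows. Because $T'\sigma'_{m}=\sigma'_{m+1}$, the left-hand side equals $\max_{G_{\textit{i}}\in\mathcal{V}(D)}\mid \sigma'_{k+2}(G_{\textit{i}})-\sigma'_{k+1}(G_{\textit{i}})\mid$, i.e. $\max_{G_{\textit{i}}}\mid \sigma'_{k+1}(G_{\textit{i}})-T'\sigma'_{k+1}(G_{\textit{i}})\mid$. Applying Lemma~\ref{provesosterminatedpre} with $\sigma'_{k+1}$ in place of $\sigma'_{k}$ bounds this by $\max_{G_{\textit{i}}}\max_{e\in E(G_{\textit{i}})}\mid L_{k+1}^{S}(e)-L_{k+2}^{S}(e)\mid$. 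Now I would unfold the refresh rule of $\mathbf{RefS}$: for $e_{\textit{ij}}\in E(G_{\textit{i}})$ the base term $L_{\textit{WeiP}}^{S}(e_{\textit{ij}})$ cancels, leaving $L_{k+1}^{S}(e_{\textit{ij}})-L_{k+2}^{S}(e_{\textit{ij}})=\beta\cdot L_{\textit{TranP}}(e_{\textit{ij}})\cdot(\sigma'_{k}(G_{\textit{j}})-\sigma'_{k+1}(G_{\textit{j}}))$. Since $L_{\textit{TranP}}(e_{\textit{ij}})=\dot{p}(s_{\textit{i}},u,v,s_{\textit{j}})\in[0,1]$, taking absolute values gives $\mid L_{k+1}^{S}(e_{\textit{ij}})-L_{k+2}^{S}(e_{\textit{ij}})\mid\le \beta\cdot\mid \sigma'_{k}(G_{\textit{j}})-\sigma'_{k+1}(G_{\textit{j}})\mid\le \beta\cdot\mid\mid \sigma'_{k+1}-\sigma'_{k}\mid\mid_{\infty}$, and maximising over $e_{\textit{ij}}$ and $G_{\textit{i}}$ produces exactly the desired bound.

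The only genuinely delicate point — and the reason Lemma~\ref{provesosterminatedpre} is invoked rather than a direct computation — is that $\sigma'_{k+1}(G_{\textit{i}})$ is not a fixed linear function of the edge labels but the value $L_{k+1}^{S}(e^{\ast})$ at whichever edge $e^{\ast}$ is picked by $\mathbf{LocSoOp}(G_{\textit{i}})$, and the selected edge may differ between iteration $k+1$ and iteration $k+2$. Lemma~\ref{provesosterminatedpre} is precisely what absorbs this: its proof (analogous to that of Lemma~\ref{provenesterminatedpre}) performs the case analysis over the possible outcomes of the $\arg\min$ and shows that, whichever edge is chosen, the change in $\sigma'$ at $G_{\textit{i}}$ between consecutive iterations is dominated by the worst-case change in an edge label at $G_{\textit{i}}$. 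Given that lemma, the rest is the routine telescoping above; I expect no further obstacle, and the completeness of the ambient space together with $\beta<1$ closes the argument.
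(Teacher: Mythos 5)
Your argument is essentially the paper's own: the paper proves this lemma by noting it is "similar to" the contraction proof for $T$, which proceeds exactly as you do — bound $\mid T'\sigma'_{k+1}(G_{\textit{i}})-T'\sigma'_{k}(G_{\textit{i}})\mid$ via Lemma~\ref{provesosterminatedpre} by the maximal change in the refreshed edge labels, then extract the factor $\beta\cdot L_{\textit{TranP}}(e_{\textit{ij}})\le\beta$ from the refresh rule and take the supremum norm. Your observation that the scalar case is simpler than the weight-pair case, and your identification of Lemma~\ref{provesosterminatedpre} as the step that absorbs the possible change of the selected $\arg\min$ edge between iterations, are both consistent with the paper's treatment.
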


\begin{proof}
The proof is similar to that of Lemma \ref{provenesterminated}.
\end{proof}
\begin{Theorem}
If $D$ is a $\textit{Leave}$ of \textbf{Abs}, then the result obtained by \textbf{SOSinLeave}($D$) is $\textit{SOS}$ of $D$.
\end{Theorem}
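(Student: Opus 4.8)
The plan is to mirror the proof of Theorem~\ref{provenesinleave} almost verbatim, since \textbf{SOSinLeave}() has exactly the same value-iteration skeleton as \textbf{NESinLeave}(), with the scalar quantity $L^{S}$, the map $Ps$, the operator $T'$ and the selector \textbf{LocSoOp} playing the roles of the pair $L=(L^{a},L^{d})$, the map $Pp$, the operator $T$ and the selector \textbf{BackInd}. Concretely, I would decompose the claim into the same two obligations used in the \textit{NES} case: first, that \textbf{SOSinLeave}($D$) terminates; second, that for every $G_{\textit{i}}\in\mathcal{V}(D)$ the execution of $G_{\textit{i}}$ induced by the edges returned at termination is the $\textit{SOE}$ of $G_{\textit{i}}$. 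Together these say precisely that the output retains one out-edge per vertex (condition~$\mathds{A}$) and that each vertex's induced execution is its $\textit{SOE}$ (condition~$\mathds{B}$), i.e.\ that the output is an $\textit{SOS}$ of $D$.

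For termination I would invoke Lemma~\ref{provesosterminated}: since $T'$ is a contraction on $\mathds{R}^{\mathcal{V}(D)}$ under $\|\cdot\|_{\infty}$, the sequence $\sigma'_{k}=Ps_{k}$ converges to the unique fixpoint $\sigma'_{*}$, hence $L^{S}_{k}(e)\to L^{S}_{*}(e)$ for each of the finitely many edges $e$ leaving a vertex of $D$. I would then argue that the minimiser chosen by \textbf{LocSoOp}($G_{\textit{i}}$) stabilises after finitely many iterations: it is the $\arg\min$ over a fixed finite set of convergent real sequences, so it is eventually constant once the limiting values $L^{S}_{*}(e)$ are pairwise distinct (or a deterministic tie-break is fixed); and once the selection at every $G_{\textit{i}}$ is frozen, one further \textbf{RefS}/\textbf{LocSoOp} round produces $Ps_{n}=Ps_{n+1}$, because the frozen selection's induced linear recurrence has $\sigma'_{*}$ as its solution, so the \textbf{while}-loop exits.

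For correctness I would let $e^{*}_{\textit{ij}}$ be the edge \textbf{LocSoOp}($G_{\textit{i}}$) returns at termination and observe that the spanning subgraph retaining exactly these out-edges has every out-degree equal to $1$, so each $G_{\textit{i}}$'s walk ends in a cycle and is a genuine execution $\pi_{\textit{i}}$ of the form $G_{\textit{i}}e^{*}_{\textit{ij}}\pi_{\textit{j}}$; Theorem~\ref{payoffconverged} guarantees $PF^{S}(\pi_{\textit{i}})$ is well defined, and unfolding $\sigma'_{*}=T'\sigma'_{*}$ through the \textbf{RefS} formula identifies $PF^{S}(\pi_{\textit{i}})$ with $Ps_{*}(G_{\textit{i}})=L^{S}_{*}(e^{*}_{\textit{ij}})$. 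The \textbf{LocSoOp} rule then yields
$$PF^{S}(\pi_{\textit{i}})=\underset{e_{\textit{ij}}\in E(G_{\textit{i}})}{\min}\{L_{\textit{WeiP}}^{S}(e_{\textit{ij}})+\beta\cdot L_{\textit{TranP}}(e_{\textit{ij}})\cdot PF^{S}(\pi_{\textit{j}})\},$$
which is the defining equation of $\textit{SOE}$; applying this at every vertex (the same reasoning makes $\pi_{\textit{j}}$ the $\textit{SOE}$ of $G_{\textit{j}}$) shows the output strategy's execution at each $G_{\textit{i}}$ is an $\textit{SOE}$.

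I expect the termination step to be the main obstacle, exactly as in the \textit{NES} case: Lemma~\ref{provesosterminated} gives only asymptotic convergence, so the real work is ruling out perpetual oscillation of the $\arg\min$ selection and showing that finiteness of $E(G_{\textit{i}})$, together with non-degeneracy of the limiting weights $L^{S}_{*}(e)$ (or a fixed tie-break inside \textbf{LocSoOp}), forces exact stabilisation in finitely many rounds. Everything else is a transcription of the \textit{NES} argument with scalars in place of weight pairs and a single $\min$ in place of the nested $\max$ of \textbf{BackInd}, so I would keep the write-up short and defer the routine parts to ``similar to Theorem~\ref{provenesinleave}'' as the paper does elsewhere.
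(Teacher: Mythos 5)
Your proposal is correct and follows essentially the same route as the paper, which simply transcribes the two-part argument of Theorem~\ref{provenesinleave} (termination via the contraction property of $T'$ from Lemma~\ref{provesosterminated}, then verification that the selected edges satisfy the defining equation of $\textit{SOE}$ at every vertex) to the scalar $L^{S}$/\textbf{LocSoOp} setting. Your added remarks on exact stabilisation of the $\arg\min$ fill in a step the paper leaves as ``trivial,'' but they do not change the approach.
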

\begin{proof}
The proof is similar to that of Theorem \ref{provenesinleave}.
\end{proof}

\textbf{Correctness of AlgNES() and AlgSOS()}
\begin{Theorem}
The results obtained from \textbf{AlgNES(Abs)} and \textbf{AlgSOS(Abs)} are $\textit{NES}$ and $\textit{SOS}$ of \textbf{Abs} respectively.
\end{Theorem}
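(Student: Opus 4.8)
The plan is to prove both claims by a backward induction along the DAG \textbf{Abs}, reducing the Non-Leave case to the Leave case through the preprocessing step and invoking Theorem~\ref{provenesinleave} (and its \textit{SOS} analogue) as the base engine. I will carry out the argument for \textbf{AlgNES} in full; the \textbf{AlgSOS} case is word-for-word the same after replacing \textbf{NESinLeave}, \textbf{NESinNonLeave}, \textbf{PrePro}, \textit{NEE} by \textbf{SOSinLeave}, \textbf{SOSinNonLeave}, \textbf{PreProS}, \textit{SOE}, and using the correctness theorem for \textbf{SOSinLeave} in place of Theorem~\ref{provenesinleave}. First I would dispose of termination: \textbf{Abs} is finite, so the list $\mathcal{L}$ is finite and each turn of the outer \textbf{while} loop advances the pointer $p$ by at least one node, while the inner calls \textbf{NESinLeave} / \textbf{NESinNonLeave} terminate by (the proof of) Theorem~\ref{provenesinleave}; hence \textbf{AlgNES(Abs)} halts.

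The key structural fact is that $\mathcal{L}$ is ordered by descending priority and, by the theorem on priorities, $prior(D)<prior(D')$ forces $D'$ not to depend on $D$; therefore at the moment the loop reaches a vertex $D$, every vertex $D'$ reachable from $D$ in \textbf{Abs} has already been processed, and $\textit{NES}(\textbf{Abs})$ already records a \textit{NEE} for every $G_{\textit{j}}$ lying in such a successor SCC --- precisely the data that \textbf{PrePro}($D$) consumes. The induction hypothesis along $\mathcal{L}$ is then: for every $D$ processed so far, the edges placed into $\textit{NES}(\textbf{Abs})$ induce, for each $G_{\textit{i}}\in\mathcal{V}(D)$, its \textit{NEE} in \textbf{ConTS}. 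The base case ($D$ a Leave) has no outgoing edges from $\mathcal{V}(D)$, so every execution stays inside $\mathcal{V}(D)$ and the claim is exactly Theorem~\ref{provenesinleave}.

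For the inductive step ($D$ a Non-Leave): if $\mathcal{V}(D)=\{G_{\textit{i}}\}$, every edge of $G_{\textit{i}}$ leads to an already-processed SCC whose \textit{NEE}, hence whose payoff $PF(\pi_{\textit{j}})$, is known, and \textbf{BackInd}($G_{\textit{i}}$) selects exactly the edge realising the nested $\max$ in the definition of \textit{NEE} (first the defender-optimal representative in each attacker-action class $E^a_{e''}(G_{\textit{i}})$, then the attacker-optimal class). If $|\mathcal{V}(D)|>1$, \textbf{PrePro}($D$) turns each edge $e_{\textit{ij}}$ with $G_{\textit{j}}\notin\mathcal{V}(D)$ into a self-loop at $G_{\textit{i}}$ carrying weight pair $L_{\textit{WeiP}}(e_{\textit{ij}})+\beta\cdot L_{\textit{TranP}}(e_{\textit{ij}})\cdot PF(\pi_{\textit{j}})$, producing a graph $D'$ all of whose vertices keep their out-edges inside $\mathcal{V}(D)$, so Theorem~\ref{provenesinleave} applies to $D'$. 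The step that still needs care is to check that a \textit{NEE} computed in $D'$ is a \textit{NEE} in \textbf{ConTS}: unfolding the recursive definition of $PF$ (convergent by Theorem~\ref{payoffconverged}), the payoff of any execution of $G_{\textit{i}}$ that exits $\mathcal{V}(D)$ through $e_{\textit{ij}}$ equals the constant attached to the replacement self-loop, executions staying inside $\mathcal{V}(D)$ are untouched, so the $\max$-conditions defining \textit{NEE} coincide in $D'$ and in \textbf{ConTS} restricted to $\mathcal{V}(D)$, and the coinductive side condition holds because the $\pi_{\textit{j}}$'s fed in are \textit{NEE} by the induction hypothesis.

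I expect the soundness of \textbf{PrePro} / \textbf{PreProS} to be the main obstacle: one must argue rigorously that collapsing the tail of an execution leaving an SCC into the single vector (resp.\ scalar) $PF(\pi_{\textit{j}})$ does not change which local choice is optimal --- this is where Theorem~\ref{payoffconverged} is indispensable, since it licenses treating that tail as a well-defined value, and where acyclicity of \textbf{Abs} is used, so that the $\pi_{\textit{j}}$'s are already frozen before $D$ is processed. Once this is in place, after the outer loop finishes $\textit{NES}(\textbf{Abs})$ contains, for every $G_{\textit{i}}\in V$, an out-edge whose induced execution is its \textit{NEE}; this edge set is a spanning subgraph of \textbf{ConTS} with every outdegree $1$ and every execution a \textit{NEE}, hence a \textit{NES} of \textbf{Abs} by definition, and the identical argument with $PF^S$, $T'$, \textbf{SOSinLeave} and \textbf{PreProS} shows that \textbf{AlgSOS(Abs)} returns \textit{SOS} of \textbf{Abs}.
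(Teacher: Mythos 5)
Your proposal is correct and follows essentially the same route as the paper: an induction on the priority of vertices in \textbf{Abs}, with the Leave case discharged by Theorem~\ref{provenesinleave} and the Non-Leave case reduced to it via \textbf{PrePro}/\textbf{PreProS} (singleton SCCs handled directly by \textbf{BackInd}/\textbf{LocSoOp}). In fact your write-up is more careful than the paper's, which declares the Non-Leave steps ``trivial''; your explicit justification that replacing an exiting edge by a self-loop weighted with $PF(\pi_{\textit{j}})$ preserves the \textit{NEE}/\textit{SOE} conditions is exactly the point the paper leaves implicit.
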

\begin{proof}
We prove the correctness of \textbf{AlgNES(Abs)} in details. Prove inductively on priority of vertex $D$ in \textbf{Abs}.\\
(1) If $D$ is a $Leave$, we need to prove the result of \textbf{NESinLeave}($D$) is $\textit{NES}$ of $D$, according to Theorem \ref{provenesinleave}, trivial;\\
(2) For $\textit{Non-Leave}$ $D$, and we assume $prior(D)=prior(D')-1$, by induction hypothesis, $D'$ has got its NES by \textbf{AlgNES}(). If $\mathcal{V}(D)=\{G_{\textit{i}}\}$ for some $G_{\textit{i}}\in V$, according to the definition of $NEE$ and rules of \textbf{BackInd}(), the proof is trivial; if the size of $\mathcal{V}(D)$ is bigger than 1, according to the theorem \ref{provenesinleave}, trivial.
\end{proof}

\section{Case study}
The details of the example we used can be found in \cite{klye}. It shows a local network connected to Internet (see Figure \ref{example}). By the assumption that the firewall is unreliable, and the operating system on the machine is insufficiently hardened, the attacker has chance to pretend as a root user in web server and steal or damage data stored in private file server and private workstation.
\begin{figure}[!htpb]
\begin{center}
\includegraphics[scale=0.32]{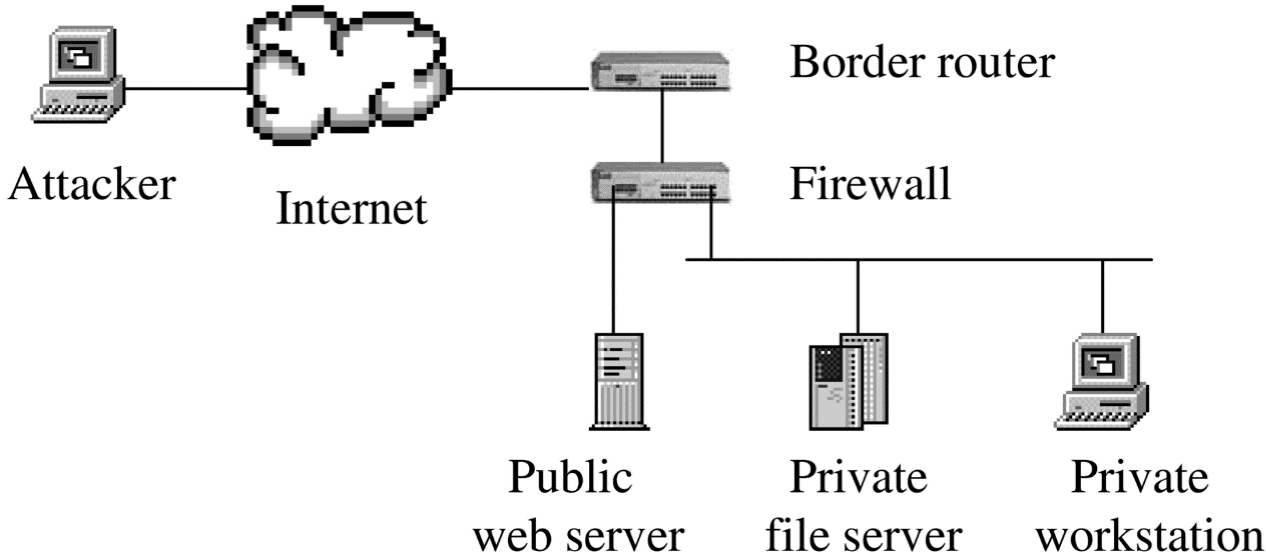}
\caption{Case study}
\label{example}
\end{center}
\end{figure}
The state set $S$ of example is shown in Table \ref{state}; $A^a$, $A^d$ is given in Table \ref{attckeraction} and Table \ref{defenderaction} respectively; for convenience, we will mostly refer to the states and actions using their symbolic number; state transition probability is shown in Table \ref{statetransition}, in which $\dot{p}(s_1,1,2,s_1)=P(1|1,1,2)$; the immediate payoff to attacker and defender at each state is shown in Table \ref{rewardcost}, in which $\dot{r}^a(s_{1},2,\cdot)=R^1(1,2,\cdot)$ and $\dot{r}^d(s_{1},2,\cdot)=R^2(1,2,\cdot)$, where $\cdot$ means any action available at current state.

\subsection{Modeling for Case study}
We modeling for state $s_1$ in \textbf{ComModel} as example, then we have $\textit{pA}_1$, $\textit{pD}_1$, $\textit{pN}_1$ as follows:
\begin{align*}
\textit{pA}_1 \stackrel{\textit{def}}{=}&  \underset{u\in A^a(s_1)}{\sum}\overline{Attc}(u).Tell_d(y).Nil \\
\textit{pD}_1  \stackrel{\textit{def}}{=}&  Tell_a(x).\underset{v\in A^d(s_1)}{\sum}\overline{\textit{Defd}}(v).Nil\\
\textit{pN}_1  \stackrel{\textit{def}}{=}&  Attc(x).\overline{Tell_a}.\textit{Defd}(y).\overline{Tell_a}.Tr_{1}(x,y)\\
Tr_{1}(x,y)  \stackrel{\textit{def}}{=}& \underset{u\in A^a(s_1)\atop v\in A^d(s_1)}{\sum}\overline{Log}(u,v).(if~(x=u,y=v)~ then\\
&  \underset{j\in I}{\sum}[\dot{p}(s_i,u,v,s_j)]\overline{Rec}(\dot{r}(s_i,u,v)). (\textit{pA}_j|\textit{pD}_j|\textit{pN}_j)\\
&  else~Nil)
\end{align*}

We find three pairs of states which are probabilistic bisimilar: $s_{13}\sim s_{15}$, $s_{14}\sim s_{16}$ and $s_{17}\sim s_{18}$. Figure \ref{ComModelcasestudy} shows the \textbf{ConTS} of case study.

\begin{figure}[htpb]
\begin{center}
\includegraphics[scale=0.32]{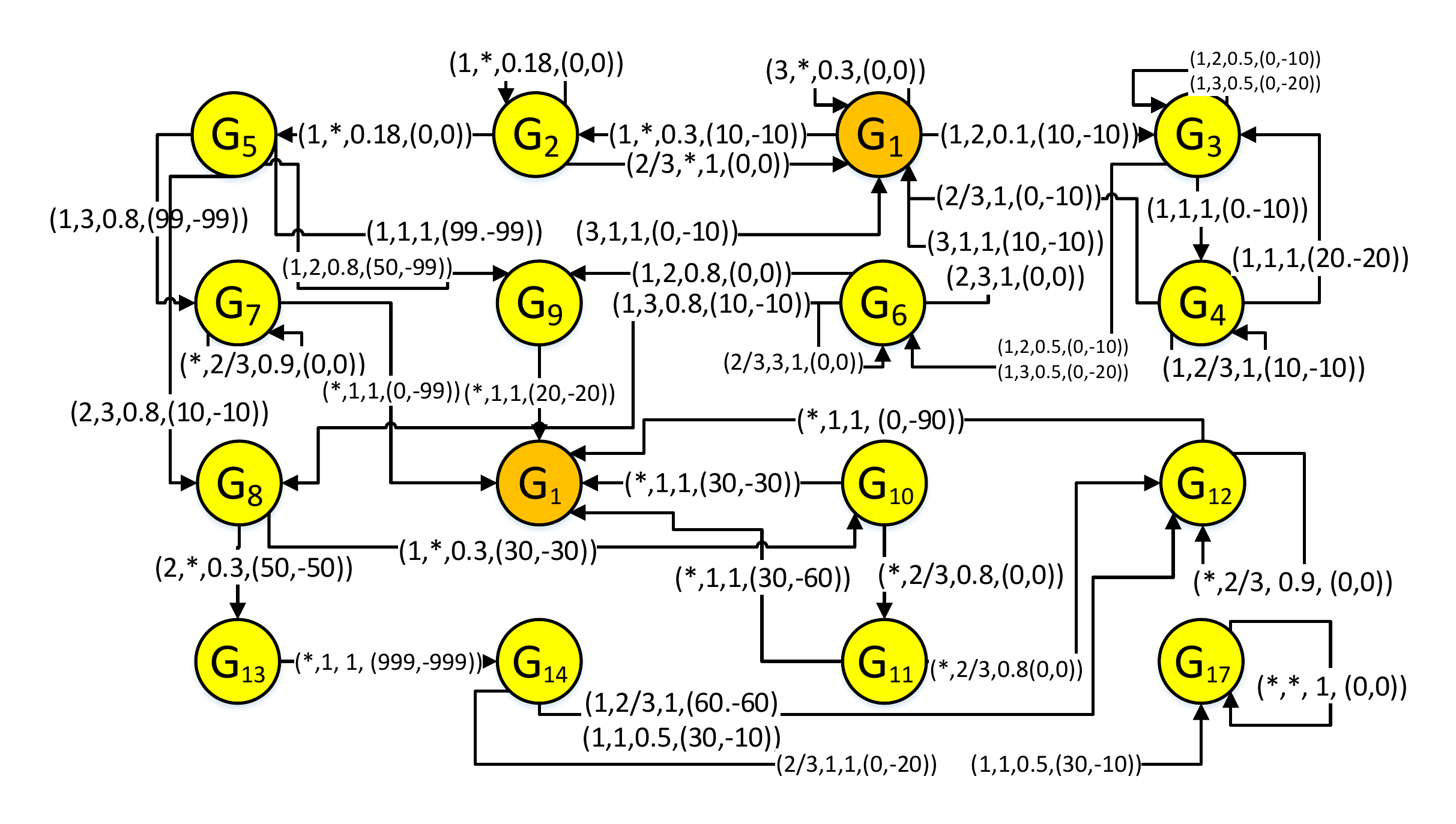}
\caption{ConTS of Example}
\label{ComModelcasestudy}
\end{center}
\end{figure}

\subsection{Analyzing NES/SOS for Case study}
We implement the algorithms using Java in Eclipse development environment on machine with 3.4GHz Inter(R) Core(TM) i72.99G RAM.
We get two Nash Equlibrium Strategies and one Social Optimal strategy for our case study, shown in Figure \ref{nash1}, \ref{nash2}, \ref{sos}  respectively.
\begin{figure}[!htb]
\begin{center}
\includegraphics[scale=0.3]{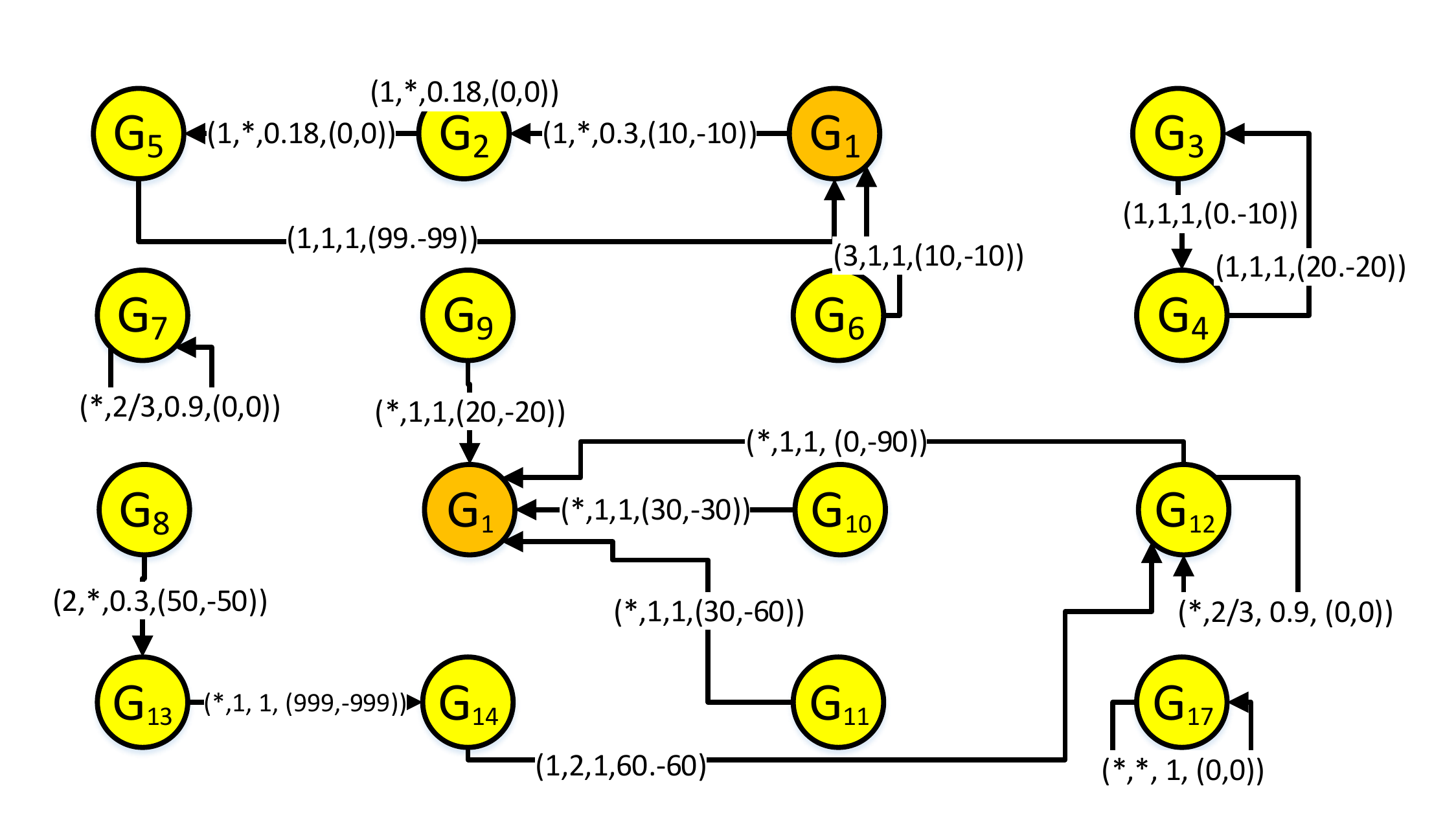}
\caption{Nash Equilibrium strategy 1}
\label{nash1}
\end{center}
\end{figure}

\begin{figure}[!htb]
\begin{center}
\includegraphics[scale=0.3]{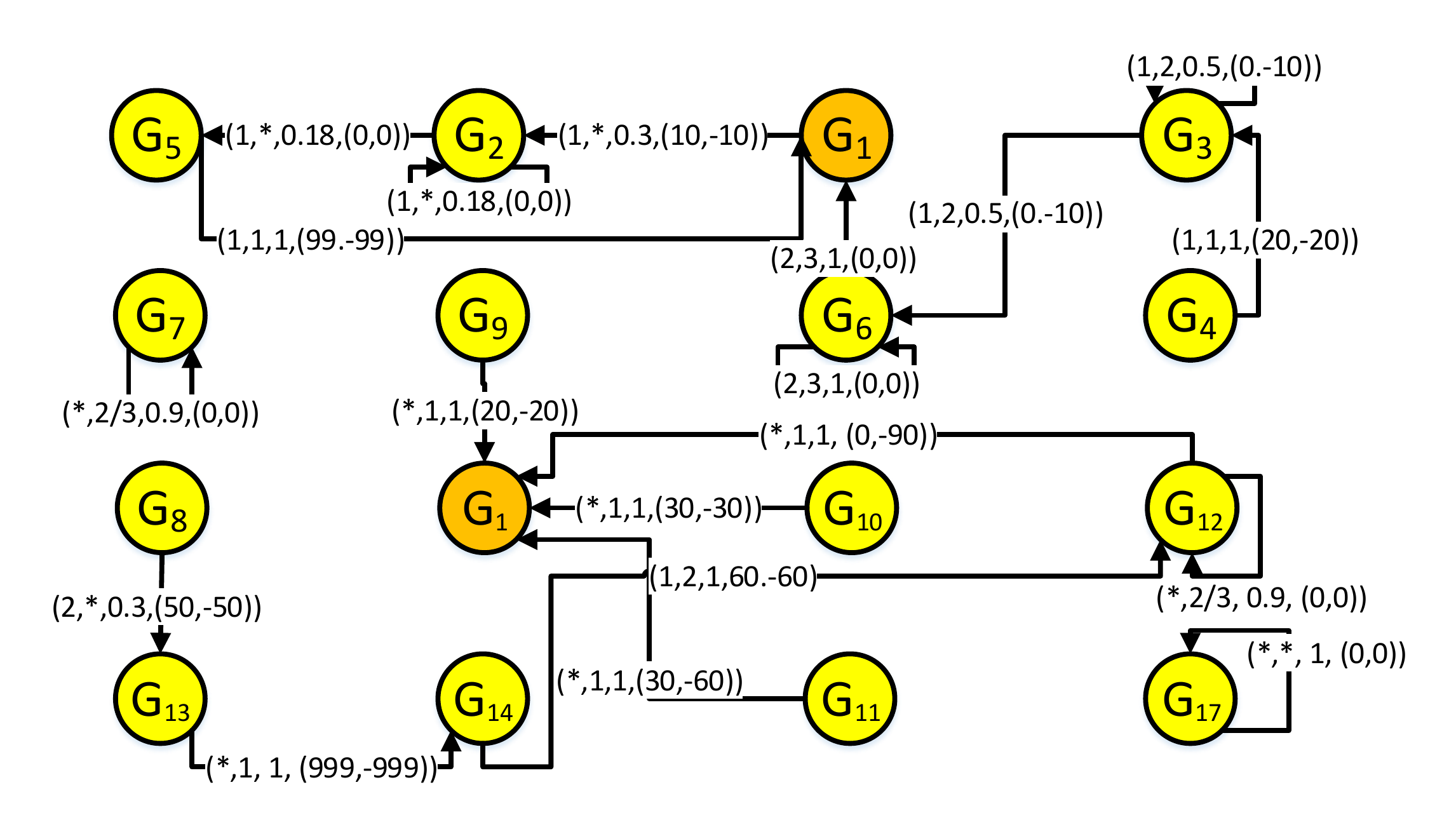}
\caption{Nash Equilibrium strategy 2}
\label{nash2}
\end{center}
\end{figure}

\begin{figure}[!htb]
\begin{center}
\includegraphics[scale=0.3]{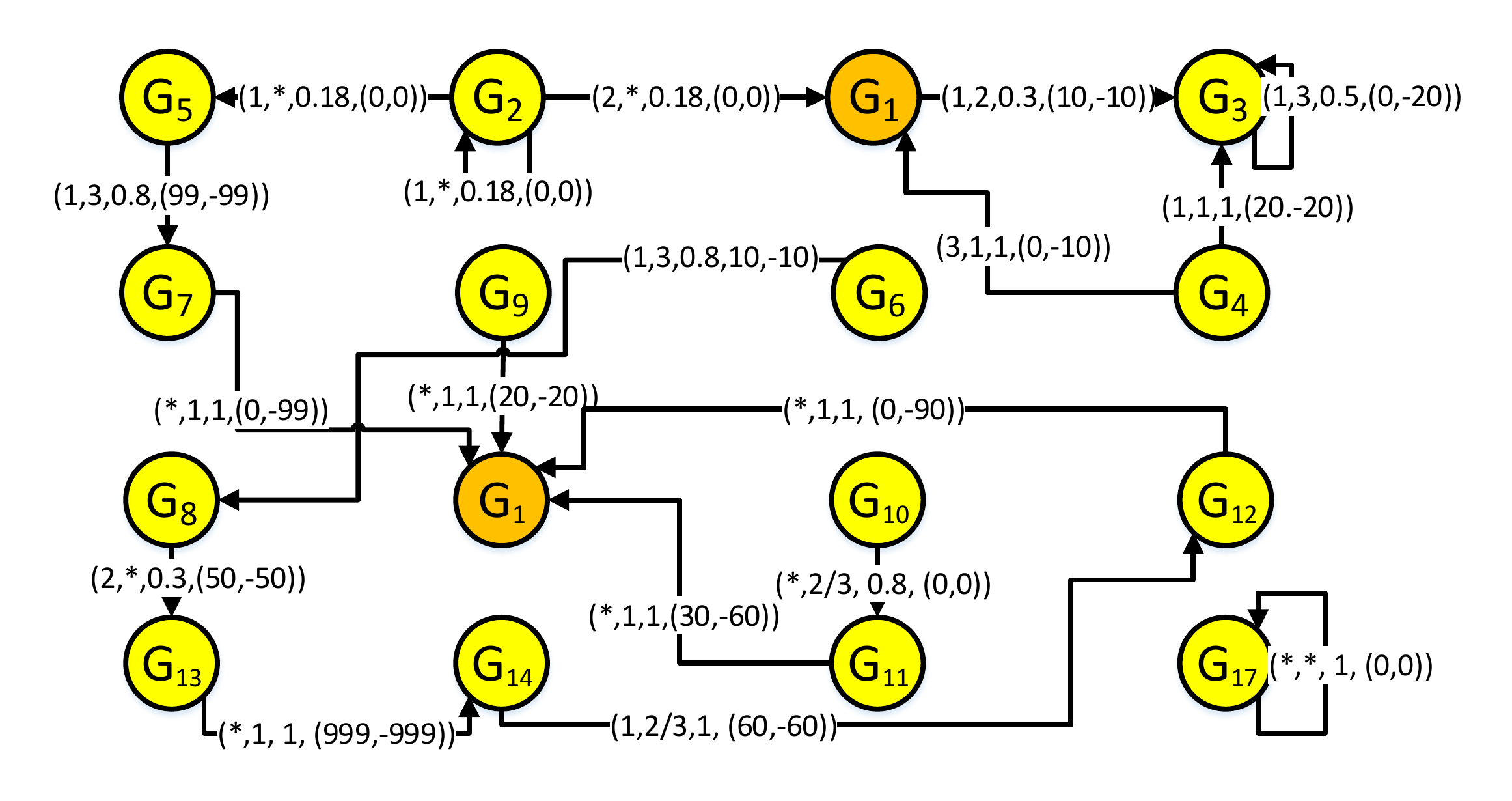}
\caption{Social Optimal strategy}
\label{sos}
\end{center}
\end{figure}

\subsection{Evaluation}
We compare our results with those obtained in \cite{klye} by game-theoretic approach: \\
(1) We filter the invalid Nash Equilibrium strategy from the results in \cite{klye}. We filter the action pair \textit{($\phi$, $Remove\_$ $\textit{Sniffer}\_$ $Detecor$)} at state $s_3$ and the action pair \textit{($Instrall\_$ $\textit{Sniffer}\_$ $Detector$, $Remove\_$Compromised\_$ $account\_ $ $restart\_$ $\textit{ftpd}$)$} at $s_6$ which obtained in the second Nash Equilibrium strategy in \cite{klye} but have no practical state transition.\\
(2) We minimize the state space by probabilistic bisimulation while   \cite{klye} focuses on the whole state set. Time consumed to compute Nash Equilibrium strategy and Social Optimal strategy for this example with our approach is shown in Table 2. Although it is incomparable with the time consumed in \cite{klye} because of evaluating on different machine models, our approach should be faster theoretically.

\begin{table}[htpd]
\centering
\scriptsize
\label{comsume}
\begin{tabular}{ccc}\hline
ComModel& Nash Equilibrium & Social Optimal\\
Creation &  strategy &  strategy \\\hline
2.8s & 3.7s & 1.4s\\\hline
\end{tabular}
\caption{Time consumed for example with our approach}
\end{table}

\section{Conclusion}
We proposed a probabilistic value-passing CCS (PVCCS) approach for modeling and analyzing a typical network security scenario with one attacker and one defender which is usually modeled by perfect and complete information game.
Extention of this method might provide uniform framework for modelling and analyzing network security scenarios which are usually modeled via different games.
We designed two algorithms for computing Nash Equilibrium strategy and Social Optimal strategy based on this PVCCS approach and on graph-theoretic methods. Advantages of these algorithms are also discussed.

\bibliographystyle{abbrv}
\bibliography{reference}

\begin{thebibliography}{10}

\bibitem{abadi}
M.~Abadi and A.~D. Gordon.
\newblock A calculus for cryptographic protocols: The spi calculus.
\newblock In {\em The 4th ACM Conf on Computer and Communications Security},
  1997.

\bibitem{bergstra}
J.~Bergstra and J.~Klop.
\newblock Verification of an alternating bit protocol by means of process
  algebra.
\newblock Technical report, Centrum voor Wiskunde en Informatica, 1984.

\bibitem{lilixu}
K.~Chatzikokolakis, D.~Gebler, C.~Palamidessi, and L.~Xu.
\newblock Generalized bisimulation metrics.
\newblock In {\em Proc. of CONCUR}, volume 8704, pages 32--46, 2014.

\bibitem{yuxin}
Y.~Deng.
\newblock {\em Semantics of probabilistic process--An operational approach}.
\newblock Springer, 2014.

\bibitem{reinhard}
R.~Diestel.
\newblock {\em Graph Theory}.
\newblock Springer-Verlag, 3 edition, 2005.

\bibitem{david}
D.~Easley and J.~Kleinberg.
\newblock {\em Networks, Crowds, and Markets:Reasoning about a Highly Connected
  World}.
\newblock Cambridge University Press, 2010.

\bibitem{Harsanyi}
J.~C. HARSANYI.
\newblock Games with incomplete information played by bayesian players, i-iii.
\newblock {\em Management Science}, 14(3), 1967.

\bibitem{jean}
E.~M. Jean~Tirole.
\newblock Markov perfect equilibrium.
\newblock {\em Journal of Economic Theory}, 2001.

\bibitem{jormakka}
J.~Jormakka and J.~Molsa.
\newblock Modelling information warfare as a game.
\newblock In {\em Journal of Information Warfare}, volume~4 of {\em 2}, 2005.

\bibitem{nguyen}
a.~T.~B. K.~C.~Nguyen, T.~Alpcan.
\newblock Stochastic games for security in networks with interdependent nodes.
\newblock In {\em Proc. of Intl. Conf. on Game Theory for Networks (GameNets)},
  2009.

\bibitem{xiannuan}
X.~Liang and Y.~Xiao.
\newblock Game theory for network security.
\newblock In {\em IEEE Communications Surveys and amp Tutorials}, 2013.

\bibitem{klye}
K.~Lye and J.~Wing.
\newblock Game strategies in network security.
\newblock In {\em Proceedings of the Foundations of Computer Security}, 2005.

\bibitem{robin}
R.~Milner.
\newblock {\em Communication and Cocurrency}.
\newblock Prentice Hall International Ltd, 1989.

\bibitem{rd}
M.~J. Osborne.
\newblock {\em An Introduction to Game Theory}.
\newblock Oxford University Press., 2000.

\bibitem{martin}
M.~J. Osborne and A.~Rubinstein.
\newblock {\em A course in Game Theory}.
\newblock MIT Press, 1994.

\bibitem{patcha}
A.~Patcha and J.~Park.
\newblock A game theoretic apporach to modeling intrusion detection in mobile
  ad hoc networks.
\newblock In {\em Proceedings of the 2004 IEEE workshop on Information
  Assurance and Security}, 2004.

\bibitem{sankardas}
S.~Roy and C.~Ellis.
\newblock A survey of game theory as applied to network security.
\newblock In {\em $43^{rd}$ Hawaii International Conference on System
  Sciences}, 2010.

\bibitem{davide07}
D.~Sangiorgi.
\newblock {\em An introduction to Bisimulation and Coinduction}.
\newblock Springer, 2007.

\bibitem{serge}
I.~M. Serge~Abiteboul and P.~Rigaux.
\newblock {\em Web data management}.
\newblock Cambridge University Press, 2011.

\bibitem{ls}
L.~Shapley.
\newblock {\em Stochastic Games}.
\newblock Princeton University press, 1953.

\bibitem{syverson}
P.~Syverson.
\newblock A different look at secure distributed computation.
\newblock In {\em Proc. 10th IEEE Computer Security Foundations Workshop},
  1997.

\bibitem{rob}
R.~van Glabbeek, S.~A. Smolka, B.~Steffen, and C.~M. Tofts.
\newblock Reactive,generative,and stratified models of probabilistic processes.
\newblock In {\em Information and Computation}, 1995.

\bibitem{vander}
J.~V.~D. Wal.
\newblock Stochastic dynamic programming.
\newblock In {\em Mathematical Centre Tracts 139}. Morgan Kaufmann, 1981.

\bibitem{xiaolin}
C.~Xiaolin, T.~Xiaobin, Z.~Yong, and X.~Hongsheng.
\newblock A markov game theory-based risk assessment model for network
  information systems.
\newblock In {\em International conference on computer science and software
  engineering}, 2008.

\end{thebibliography}

\newpage
\appendix

\section{Proofs of Theorems}
Proof of Theorem \ref{payoffconverged}
\begin{proof}
As vertex set $V$ is finite, then any infinite execution $\pi_{\textit{i}}$ of $G_{\textit{i}}$ is in form of $ G_{\textit{i}}e_{\textit{i1}}G_{\textit{1}}...(G_{\textit{k}}...G_ {\textit{(k+m)}}e_{\textit{(k+m)k}}G_{\textit{k}})$ which means ending with a cycle starting with $G_{\textit{k}}$, and $m$ is the number of vertex on this cycle except $G_{\textit{k}}$, then we have
\begin{align*}
PF^a(\pi_{\textit{i}})&=L^a_{\textit{WeiP}}(e_{\textit{i1}})+\beta\cdot PF^a(\pi_{\textit{i}}[1])\\
&=L^a_{\textit{WeiP}}(e_{\textit{i1}})+...+\beta^{k}\cdot PF^a(\pi_{\textit{i}}[k])\\
&=A+\beta^{k}\cdot PF^a(\pi_{\textit{i}}[k])\\
where~A&=L^a_{\textit{WeiP}}(e_{\textit{i1}})+...+\beta^{k-1}\cdot L^a_{\textit{WeiP}}(e_{\textit{(k-1)k}}) \\
PF^a(\pi_{\textit{i}}[k]) & =L^a_{\textit{WeiP}}(e_{\textit{k(k+1)}})+...+\beta^{m}\cdot L^a_{\textit{WeiP}}(e_{\textit{(k+m)k}})\\
&+\beta^{m+1}\cdot B+\beta^{2m+2}\cdot B+...\\
& =(B\cdot \frac{1-\beta^{(m+1)h}}{1-\beta^{(m+1)}})_{h\rightarrow \infty}\\
where~B&=L^a_{\textit{WeiP}}(e_{\textit{k(k+1)}})+...+\beta^{m}\cdot L^a_{\textit{WeiP}}(e_{\textit{(k+m)k}})\\
\underset{h\rightarrow \infty}{\lim}PF^a(\pi_{\textit{i}})&= \underset{h\rightarrow \infty}{\lim}(A+\beta^{k}(B\cdot \frac{1-\beta^{(m+1)h}}{1-\beta^{(m+1)}}))\\
&=A+\beta^{k}\cdot B\cdot \underset{h\rightarrow \infty}{\lim}(\frac{1-\beta^{(m+1)h}}{1-\beta^{(m+1)}})\\
&=A+B\cdot \frac{\beta^{k}}{1-\beta^{(m+1)}}
\end{align*}
\end{proof}
Proof of Lemma \ref{provenesterminatedpre}
\begin{proof}
Assuming without loss of generality, $\sigma_{k}(G_{\textit{i}})=(L_{k}(e_{1})^a, L_{k}(e_{1})^d)$ and $T\sigma_{k}(G_{\textit{i}})=(L_{k+1}(e_{2})^a, L_{k+1}(e_{2})^d)$, where $e_1$, $e_2\in E(G_{\textit{i}})$. Let $L_{k}^a(e_{1})=a$, $L_{k}^a(e_{2})=b$, $L_{k+1}^a(e_{1})=a'$ and $L_{k+1}^a(e_{2})=b'$, where $a$, $a'$, $b$, $b'$ are positive number.\\
\textbf{case 1}: $L_{\textit{Act}}^a(e_{1})=L_{\textit{Act}}^a(e_{2})$\\
According to the rules of \textbf{BackInd}(), we have $a<b$ and $b'<a'$. If the first inequality in lemma doesn't hold, then we have $\mid a-b'\mid >\mid a-a'\mid$ and $\mid a-b' \mid>\mid b-b'\mid$, then we get $(b'-a')(b'+a')>2a(b'-a')$ and $(a-b)(a+b)>2b'(a-b)$ which deduce $a-b>a'-b'$, contradiction.\\
\textbf{case 2}: $L_{\textit{Act}}^a(e_{1})\neq L_{\textit{Act}}^a(e_{2})$\\
Let us define two conditions:\\
Cond 1: on kth iteration, $e_{2}$ is kept by step (2) of \textbf{BackInd}(). \\
Cond 2: on (k+1)th iteration, $e_{1}$ is kept by step (2) of \textbf{BackInd}().\\
There are four subcases to be considered:\\
\textbf{$\textit{case 2.1}$}: both Cond 1 and Cond 2\\
According to the rules of \textbf{BackInd}(), we have $a>b$ and $b'>a'$. If $\mid a-b'\mid >\mid a-a'\mid$ and $\mid a-b' \mid>\mid b-b'\mid$, then we get $b-a>b'-a'$, contradiction. \\
\textbf{$\textit{case 2.2}$}: not Cond 2 but Cond 1 \\
According to the rules of \textbf{BackInd}(), $\exists e'$ with $L_{Act}^a(e')=L_{Act}^a(e_{1})$. Assuming $L_{k}^a(e')=c$ and $L_{k+1}^a(e')=c'$, then we have $c>a>b$, $a'>c'$ and $b'>c'$. If $\mid a-b'\mid >\mid a-a'\mid$ and $\mid a-b' \mid>\mid b-b'\mid$, then we have $(b'-a')(b'+a')>2a(b'-a')$ and $a+b>2b'$. If $b'>a'>c'$, it is trivial to get contradiction; If $c'<b'<a'$, then we have $2b'<b'+a'<2a$ and $2c>a+b>2b'>2c'$, then we have $b'<a$ and $c>c'$. If $\mid a-b' \mid>\mid c-c'\mid$, then we have $a-c>b'-c'$, contradiction; If $b'=a'$, contradiction. \\
\textbf{$\textit{case 2.3}$}: not Cond 1 but Cond 2\\
According to the rules of \textbf{BackInd}(), $\exists e'$ with $L_{Act}^a(e')=L_{Act}^a(e_{2})$. proof is similar to \textbf{$\textit{case 2.2}$}.\\
\textbf{$\textit{case 2.4}$}: neither Cond 1 nor Cond 2\\
According to the rules of \textbf{BackInd}(), $\exists e', e''$ with $L_{Act}^a(e')=L_{Act}^a(e_{1})$ and $L_{Act}^a(e'')=L_{Act}^a(e_{2})$. Assuming $L_{k}^a(e')=c$ and $L_{k+1}^a(e')=c'$,$L_{k}^a(e'')=d$ and $L_{k+1}^a(e'')=d'$, then we have $d<a<c$, $d<b$, $a'>c'$ and $c'<b'<d'$. If $\mid a-b'\mid >\mid a-a'\mid$ and $\mid a-b' \mid>\mid b-b'\mid$, then we have $(b'-a')(b'+a')>2a(b'-a')$ and $(a-b)(a+b)>2b'(a-b)$. If $a>b$ and $a'>b'$, then we have $c'<c$ and $a>b'$, and if $\mid a-b'\mid >\mid c-c'\mid$, then we $a-c>b'-c'$, contradiction; If $a<b$ and $a'>b'$ or $a>b$ and $a'<b'$, it is trivial to get contradiction; If $a<b$ and $a'<b'$, then we get $d'>d$ and $a<b'$, and if $\mid a-b'\mid >\mid d-d'\mid$, then we get $b'-d'>a-d$, contradiction.\\
Proof for second inequality is similar. We skip the details.\\
\end{proof}

Proof for Theorem \ref{provenesinleave}.
\begin{proof}
We need to prove two issues: \\
1. \begin{bf}NESinLeave\end{bf}($D$) is terminated.\\
The way to prove termination of \begin{bf}NESinLeave\end{bf}($D$) is to prove $\exists k$ that after kth iteration, $\forall G_{\textit{i}},~Pp_{\textit{k}}(G_{\textit{i}})= Pp_{\textit{k+1}}(G_{\textit{i}})$.
According to Lemma \ref{provenesterminated}, trivial;\\
2. The result of \begin{bf}NESinLeave\end{bf}($D$) is $\textit{NES}$ of $D$. $\forall G_{i}\in \mathcal{V}(D)$, assuming $\pi_{\textit{i}}$ whose first edge is $e$ is the execution of $G_{\textit{i}}$ based on the result obtained by \begin{bf}NESinLeave\end{bf}($D$), we need to prove $\pi_{\textit{i}}^e$ is $\textit{NEE}$ of $G_{\textit{i}}$ coinductively. As $\pi_{\textit{i}}^e$ is ended by a cycle, we just need to prove any $e'$ on $\pi_{\textit{i}}^e$, $e'\in E(G_{\textit{j}})$, is the first edge of \textit{NEE} of $G_{\textit{j}}$. We prove edge $e$ of $G_{\textit{i}}$ as example. If $\pi_{\textit{i}}^e$ is not \textit{NEE} of $G_{\textit{i}}$, according to the definition of \textit{NEE}, there exists $\pi_{\textit{i}}^{e'}$ satisfying:
(1) $PF^d(\pi_{\textit{i}}^{e'})> PF^d(\pi_{\textit{i}}^{e})$ where $L_{\textit{Act}}^a(e')= L_{\textit{Act}}^a(e)$ or
(2) $PF^a(\pi_{\textit{i}}^{e'})> PF^a(\pi_{\textit{i}}^{e})$ where $e'=\arg\underset{e''\in E^a(G_{\textit{i}},e')}{\max} PF^d(\pi_{\textit{i}}^{e''})$, and both of them are contradicted with the rules in \textbf{BackInd}().\\
\end{proof}

\section{Tables of case study}
To make paper self-contained, we list the data related in example created in \cite{klye}.
\newcounter{Rownumber}
\newcommand{\Rown}{\stepcounter{Rownumber}\theRownumber}
\begin{table}[htbp]
\scriptsize
\centering
\begin{tabular}{cc}\hline
State number & State name\\\hline
\Rown        & $Normal\_operation$\\
\Rown        & $Httpd\_attacked$\\
\Rown        & $Ftp\_attacked$\\
\Rown        & $Ftpd\_attacked_detector$\\
\Rown        & $Httpd\_hacked$\\
\Rown        & $Ftpd\_hacked$\\
\Rown        & $Website\_defaced$\\
\Rown        & $Websever\_sniffer$\\
\Rown        & $Websever\_sniffer\_detector$\\
\Rown        & $Websever\_DOS\_1$\\
\Rown        & $Websever\_DOS\_2$\\
\Rown        & $Network\_shutdown$\\
\Rown        & $Filesever\_hacked$\\
\Rown        & $Filesever\_data\_stolen$\\
\Rown        & $Workstation\_hacked$\\
\Rown        & $Workstation\_data\_stolen\_1$\\
\Rown        & $Filesever\_data\_stolen$\\
\Rown        & $Workstation\_data\_stolen\_2$\\
\hline
\end{tabular}
\caption{\label{state}Network state}
\end{table}

\newcounter{Rownumbera}
\newcommand{\Rowna}{\stepcounter{Rownumbera}\theRownumbera}
\renewcommand{\multirowsetup}{\centering}
\begin{table}[htpb]
\scriptsize
\centering
\begin{tabular}{cccc}\hline
$State~no.\backslash$  &  1      &  2       & 3         \\
Action no.             &         &          &           \\\hline
\Rowna                 & $\phi$  &  $\phi$  & $\phi$    \\
\Rowna                 & $\phi$  &  $\phi$  & $\phi$    \\
\Rowna                 & $Install\_Sniffer\_$  & $\phi$ &   $\phi$\\
                       & $Detctor$   &    &   \\
\Rowna                 & $Remove\_Sniffer\_$  &  $\phi$     &   $\phi$\\
                       & $Detctor$ &   &  \\
\Rowna                 & $Remove\_Compromised\_$  &  $Install\_sniffer\_$  &   $\phi$\\
                       & $account\_restart\_httpd$  & $detector$ &    \\
\Rowna                 & $Remove\_Compromised\_$  &  $Install\_sniffer\_$  &   $\phi$\\
                       & $account\_restart\_ftpd$  & $detector$ &  \\
\Rowna                 & $Restore\_website\_remove\_$  &  $\phi$     &   $\phi$\\
                       & $compromised\_account$  &   &   \\
\Rowna                 & $\phi$  &  $\phi$     &   $\phi$\\
\Rowna                 & $Remove\_sniffer\_and\_$  &  $\phi$     &   $\phi$\\
                       & $Compromised\_account$  &     &   \\
\Rowna                 & $Remove\_virus\_and\_$  &  $\phi$     &   $\phi$ \\
                       & $Compromised\_account$  &   &   \\
\Rowna                 & $Remove\_virus\_and\_$  &  $\phi$     &   $\phi$\\
                       & $Compromised\_accoun$  &   &   \\
\Rowna                 & $Remove\_virus\_and\_$  &  $\phi$     &   $\phi$\\
                       & $Compromised\_account$  &    &   \\
\Rowna                 & $\phi$  &  $\phi$     &   $\phi$\\
\Rowna                 & $Remove\_sniffer\_and$  &  $\phi$     &   $\phi$\\
                       & $\_Compromised\_account$ &  &  \\
\Rowna                 & $\phi$  &  $\phi$     &   $\phi$\\
\Rowna                 & $Remove\_sniffer\_and$  &  $\phi$     &   $\phi$\\
                       & $\_Compromised\_account$  & & \\
\Rowna                 & $\phi$  &  $\phi$     &   $\phi$\\
\Rowna                 & $\phi$  &  $\phi$     &   $\phi$\\
\hline
\end{tabular}
\caption{\label{defenderaction} Defender's action set}
\end{table}

\newcounter{Rownumberb}
\newcommand{\Rownb}{\stepcounter{Rownumberb}\theRownumberb}
\begin{table}[htbp]
\scriptsize
\centering
\begin{tabular}{cccc}\hline
$State~no.\backslash$  &  1      &  2       & 3         \\
Action no.             &         &          &           \\\hline
\Rownb                 & $Attack\_httpd$  &  $Attack_ftpd$     &   $\phi$\\
\Rownb                 & $Continue\_$  &  $\phi$     &   $\phi$\\
                       & $attacking$  &  &  \\
\Rownb                 & $Continue\_$  &  $\phi$     &   $\phi$\\
                       & $attacking$   &  &  \\
\Rownb                 & $Continue\_$  &  $\phi$     &   $\phi$\\
                       & $attacking$   &  &  \\
\Rownb                 & $Deface\_$  &  $Install\_$     &   $\phi$\\
                       & $website$   & $sniffer$  &  \\
\Rownb                 & $Install\_$  &  $\phi$     &   $\phi$\\
                       & $sniffer$  &  &  \\
\Rownb                 & $\phi$  &  $\phi$  & $\phi$    \\
\Rownb                & $Run\_DOS$  &  $Crack\_fileserver\_$ &   $Crack\_workstation$\\
                       & $\_virus$ & $ root\_password$& $\_root\_password $\\
\Rownb                 & $\phi$  &  $\phi$  & $\phi$    \\
\Rownb                 & $\phi$  &  $\phi$  & $\phi$    \\
\Rownb                 & $\phi$  &  $\phi$  & $\phi$    \\
\Rownb                 & $\phi$  &  $\phi$  & $\phi$    \\
\Rownb                 & $Capture$  &  $\phi$  & $\phi$    \\
                       & $\_data$ &  &  \\
\Rownb                 & $Shutdown$  &  $\phi$  & $\phi$    \\
                       & $\_network$ &  &  \\
\Rownb                 & $Capture$  &  $\phi$  & $\phi$    \\
                       & $\_data$ &  &  \\
\Rownb                 & $Shutdown$  &  $\phi$  & $\phi$    \\
                       & $\_network$ &  &  \\
\Rownb                 & $\phi$  &  $\phi$  & $\phi$    \\
\Rownb                 & $\phi$  &  $\phi$  & $\phi$    \\
\hline
\end{tabular}
\caption{\label{attckeraction} Attacker's action set}
\end{table}

\begin{table}[htbp]
\centering
\scriptsize
\begin{tabular}{ll}\hline
$R^{1}(1)=\begin{bmatrix}10&10&10 \\10&10&10 \\0&0&0\end{bmatrix}$           &$R^{2}(1)=-R^{1}(1)$  \\
$R^{1}(2)=\begin{bmatrix}0&0&0 \\0&0&0 \\0&0&0\end{bmatrix}$                 &$R^{2}(2)=R^{1}(2)$   \\
$R^{1}(3)=\begin{bmatrix}0&0&0 \\0&0&0 \\0&0&0\end{bmatrix}$                 &$R^{2}(3)=\begin{bmatrix}-10&-10&-20 \\-10&-10&0 \\-10&-10&0\end{bmatrix}$\\
$R^{1}(4)=\begin{bmatrix}20&10&10 \\0&0&0 \\0&0&0\end{bmatrix}$              &$R^{2}(4)=\begin{bmatrix}-20&-10&-10 \\-10&0&0 \\-10&0&0\end{bmatrix}$\\
$R^{1}(5)=\begin{bmatrix}99&50&99 \\10&0&10 \\0&10&0\end{bmatrix}$          &$R^{2}(5)=\begin{bmatrix}-99&-99&-99 \\10&10&-10 \\-10&-10&0\end{bmatrix}$\\
$R^{1}(6)=\begin{bmatrix}0&0&10 \\10&0&0 \\10&0&0\end{bmatrix}$          &$R^{2}(6)=-R^{1}(6)$\\
$R^{1}(7)=\begin{bmatrix}0&0&0 \\0&0&0 \\0&0&0\end{bmatrix}$          &$R^{2}(7)=\begin{bmatrix}-99&0&0 \\-99&0&0 \\-99&0&0\end{bmatrix}$\\
$R^{1}(8)=\begin{bmatrix}30&30&30 \\50&50&50 \\50&50&50\end{bmatrix}$          &$R^{2}(8)=-R^{1}(8)$\\
$R^{1}(9)=\begin{bmatrix}-20&0&0 \\-20&0&0 \\-20&0&0\end{bmatrix}$          &$R^{2}(9)=R^{1}(9)$\\
$R^{1}(10)=\begin{bmatrix}30&0&0 \\30&0&0 \\30&0&0\end{bmatrix}$          &$R^{2}(10)=-R^{1}(10)$\\
$R^{1}(11)=\begin{bmatrix}30&0&0 \\30&0&0 \\30&0&0\end{bmatrix}$          &$R^{2}(11)=\begin{bmatrix}-60&0&0 \\-60&0&0 \\-60&0&0\end{bmatrix}$\\
$R^{1}(12)=\begin{bmatrix}0&0&0 \\0&0&0 \\0&0&0\end{bmatrix}$          &$R^{2}(12)=\begin{bmatrix}-90&0&0 \\-90&0&0 \\-90&0&0\end{bmatrix}$\\
$R^{1}(13)=\begin{bmatrix}999&0&0 \\999&0&0 \\999&0&0\end{bmatrix}$          &$R^{2}(13)=-R^{1}(13)$\\
$R^{1}(14)=\begin{bmatrix}30&60&60 \\0&0&0 \\0&0&0\end{bmatrix}$          &$R^{2}(14)=\begin{bmatrix}-10&-60&-60 \\-20&0&0 \\-20&0&0\end{bmatrix}$\\
$R^{1}(15)=\begin{bmatrix}999&0&0 \\999&0&0 \\999&0&0\end{bmatrix}$          &$R^{2}(15)=-R^{1}(13)$\\
$R^{1}(16)=\begin{bmatrix}30&60&60 \\0&0&0 \\0&0&0\end{bmatrix}$          &$R^{2}(16)=\begin{bmatrix}-10&-60&-60 \\-20&0&0 \\-20&0&0\end{bmatrix}$\\
$R^{1}(17)=\begin{bmatrix}0&0&0 \\0&0&0 \\0&0&0\end{bmatrix}$                 &$R^{2}(17)=R^{1}(17)$   \\
$R^{1}(18)=\begin{bmatrix}0&0&0 \\0&0&0 \\0&0&0\end{bmatrix}$                 &$R^{2}(18)=R^{1}(18)$   \\
\hline
\end{tabular}
\caption{\label{rewardcost} Immediate payoff to Attacker and Defender}
\end{table}

\begin{table}[htbp]
\scriptsize
\centering
\begin{tabular}{lll}\hline
$\begin{bf}State~1\end{bf}$         &$\begin{bf}State~2\end{bf}$        &$\begin{bf}State~3\end{bf}$\\
$p(2|1,1,\cdot)=1/3$                &$p(2|2,1,\cdot)=0.5/3$             &$p(3|3,1,2)=0.5$\\
$p(3|1,1,2)=1/3$                &$p(5|2,1,\cdot)=0.5/3$             &$p(3|3,1,3)=0.5$\\
$p(1|1,3,\cdot)=1/3$                &$p(1|2,2,\cdot)=1$             &$p(6|3,1,2)=0.5$\\
$ $                &$p(1|2,3,\cdot)=1$             &$p(6|3,1,3)=0.5$\\
$ $                &$ $             &$p(4|3,1,1)=1$\\
$\begin{bf}State~4\end{bf}$         &$\begin{bf}State~5\end{bf}$        &$\begin{bf}State~6\end{bf}$\\
$p(1|4,2,1)=1$                &$p(7|5,1,3)=0.8$             &$p(8|6,1,3)=0.8$\\
$p(1|4,3,1)=1$                &$p(8|5,2,3)=0.8$             &$p(9|6,1,2)=0.8$\\
$p(3|4,1,1)=1$                &$p(9|5,1,2)=0.8$             &$p(1|6,2,3)=1$\\
$p(4|4,1,2)=1$                &$p(1|5,3,1)=1$             &$p(1|6,3,1)=1$\\
$p(4|4,1,3)=1$                &$p(1|5,1,1)=1$             &$p(6|6,2,3)=1$\\
$ $                &$ $             &$p(6|6,3,3)=1$\\
$\begin{bf}State~7\end{bf}$         &$\begin{bf}State~8\end{bf}$        &$\begin{bf}State~9\end{bf}$\\
$p(1|7,\cdot,1)=1$                &$p(10|8,1,\cdot)=1/3$             &$p(1|9,\cdot,1)=1$\\
$p(7|7,\cdot,2)=0.9$                &$p(13|8,2,\cdot)=0.3$             &$ $\\
$p(7|7,\cdot,3)=0.9$                &$p(15|8,3,\cdot)=0.3$             &$ $\\
$\begin{bf}State~10\end{bf}$         &$\begin{bf}State~11\end{bf}$        &$\begin{bf}State~12\end{bf}$\\
$p(1|10,\cdot,1)=1$                &$p(1|11,\cdot,1)=1$             &$p(1|12,1,\cdot)=1$\\
$p(11|10,\cdot,2)=0.8$                &$p(12|11,\cdot,2)=0.8$             &$p(12|12,\cdot,2)=0.9$\\
$p(11|10,\cdot,3)=0.8$                &$p(12|11,\cdot,3)=0.8$             &$p(12|12,\cdot,3)=0.9$\\
$\begin{bf}State~13\end{bf}$         &$\begin{bf}State~14\end{bf}$        &$\begin{bf}State~15\end{bf}$\\
$p(14|13,1,\cdot)=1$                &$p(12|14,1,2)=1$             &$p(16|15,1,\cdot)=1$\\
$ $                &$p(12|14,1,3)=1$             &$ $\\
$ $                &$p(17|14,2,1)=1$             &$ $\\
$ $                &$p(17|14,3,1)=1$             &$ $\\
$ $                &$p(12|14,1,1)=0.5$             &$ $\\
$ $                &$p(17|14,1,1)=0.5$             &$ $\\
$\begin{bf}State~16\end{bf}$         &$\begin{bf}State~17\end{bf}$        &$\begin{bf}State~18\end{bf}$\\
$p(12|16,1,2)=1$                &$p(17|17,\cdot,\cdot)=0.9$             &$p(18|18,\cdot,\cdot)=0.9$\\
$p(12|16,1,3)=1$                &$ $             &$ $\\
$p(18|16,2,1)=1$                &$ $             &$ $\\
$p(18|16,3,1)=1$                &$ $             &$ $\\
$p(12|16,1,1)=0.5$                &$ $             &$ $\\
$p(18|16,1,1)=0.5$                &$ $             &$ $\\
\hline
\end{tabular}
\caption{\label{statetransition} State transition probabilities}
\end{table}
$ $\\

\section{Notation Index}
\begin{table}[htbp]
\small
\begin{tabular}{ll}
$Abs,~5 $& $\rm{NESinNonLeave()},~7 $\\
$\rm{AlgNES()}, ~5$& $PF^a(\pi_{\textit{i}}), ~4$\\
$\rm{AlgSOS()}, ~5 $& $PF^d(\pi_{\textit{i}}), ~4 $\\
$\rm{BackInd()}, ~6 $& $Pp_{n}(G_{\textit{i}}),~6 $\\
$\textbf{ComModel}, ~3 $& $Ps_{n}(G_{\textit{i}}),~6$\\
$\rm{ConTS}, ~4 $& $\rm{PrePro()},~7$\\
$D ,~5$&$\rm{PreProS()},~7$\\
$execution, ~4$& $\rm{PVCCS}\textsubscript{R},~2 $\\
$L_{n}(e),~6 $&$\rm{RefN()},~6$\\
$L_{n}^{S}(e), ~6$&$\rm{RefS()},~6$\\
$Leave,~5 $& $SOE, ~5$\\
$\rm{LocSoOp()},~6 $& $SOS, ~5 $\\
$NEE,~5 $& $\rm{SOSinLeave()}, ~6$\\
$NES, ~5$& $\rm{SOSinNonLeave()}, ~7$\\
$\rm{NESinLeave()},~6 $& $\mathcal{V}(D),~5$\\
\end{tabular}
\end{table}
\end{document}